\renewcommand{\myyear}{2019}
\renewcommand{\today}{}
\renewcommand{\cfrac}[2]{\ \begin{array}{c}\multicolumn{1}{c|}{#1}\\ \hline\multicolumn{1}{|c}{#2}\end{array}\ }
\newcommand{\specialcell}[2][c]{\begin{tabular}[#1]{@{}c@{}}#2\end{tabular}}
\newtheorem{proposition}{Proposition}[section]
\begin{document}

\volume{Volume x, Issue x, \myyear\today}
\title{Optimal Uncertainty Quantification of a Risk Measurement From a Thermal-Hydraulic Code using Canonical Moments}
\titlehead{OUQ using Canonical Moments}
\authorhead{J. Stenger, F. Bamboa, M. Keller, \& B. Iooss}
\corrauthor[1,2]{J. Stenger}
\author[2]{F. Gamboa}
\author[1]{M. Keller}
\author[1,2]{B. Iooss}
\corremail{jerome.stenger@edf.fr}
\corraddress{EDF R$\&$D, 6 quai Watier, 78401 Chatou, France}
\address[1]{EDF R$\&$D, 6 quai Watier, 78401 Chatou, France}
\address[2]{Universit\'e Paul Sabatier, 118 route de Narbonne, 31400 Toulouse, France}

\dataO{12/04/2019}
\dataF{12/04/2019}

\abstract{In uncertainty quantification studies, a major topic of interest is to assess the uncertainties tainting the results of a computer simulation. In this work we gain robustness on the quantification of a risk measurement by accounting for all sources of uncertainties tainting the inputs of a computer code. To that extent, we evaluate the maximum quantile over a class of bounded distributions satisfying constraints on their moments. Two options are available when dealing with such complex optimization problems: one can either optimize under constraints; or preferably, one should reformulate the objective function. We identify a well suited parameterization to compute the maximal quantile based on the theory of canonical moments. It allows an effective, free of constraints, optimization. This methodology is applied on an industrial computer code related to nuclear safety.
}

\keywords{uncertainty quantification, canonical moments, computational statistics, stochastic optimization}

\maketitle

\section{Introduction}

Computer codes are increasingly used to measure safety margins in nuclear accident management analysis instead of conservative procedures \citep{pourgol-mohamad_integrated_2009}. In this context, it is essential to evaluate the accuracy of the numerical model results, whose uncertainties come mainly from the lack of knowledge of the underlying physic and the model input parameters. The Best Estimate Plus Uncertainty (BEPU) methods \citep{iooss_advanced_2018} were developed in safety analyses, especially for the large break loss of coolant accident (see \cite{prosek_state---art_2007}, \cite{sanchez-saez_uncertainty_2018}). Its principles rely mainly on a probabilistic modeling of the model input uncertainties, on Monte Carlo sampling for running the thermal-hydraulic computer code on sets of input, and on the application of statistical tools to infer high quantiles of the scalar output variables of interest \citep{wallis_uncertainty_2004} . 

This takes place in a more general setting, known as Uncertainty Quantification (UQ) methods \citep{de_rocquigny_uncertainty_2008}. Quantitative assessment of the uncertainties tainting the results of computer simulations is a major topic of interest in both industrial and scientific communities. In the context of nuclear safety, the computer models are expensive to run. Uncertainty propagation, risk measurement such as high quantile inference, or system robustness analysis become a difficult task using such models. In order to circumvent this problem, a widely accepted method consists in replacing the cpu time expensive numerical simulations by inexpensive mathematical functions called metamodels. This metamodel is build from a set of computer code simulations that must be as representative as possible of the code in the variation domain of its uncertain inputs. Generally, space-filling designs of experiments are created with a given budget of $n$ code evaluations, that provide a full coverage of the input space \citep{fang_design_2005}. In the presence of a high number of input parameters, screening strategies are then performed in order to identify the Primary Influential Inputs (PII) on the model output variability and rank them by decreasing influence. From the learning sample, a metamodel is therefore built to fit the simulator output, considering only the PII as the explanatory inputs, while the remaining inputs remain fixed to a default value. Among all the metamodel-based solutions (polynomials, splines, neural networks, etc.), Gaussian process metamodeling, also known as Kriging \citep{rasmussen_gaussian_2005}, has been very attractive. It makes the assumption that the response is a realization of a Gaussian process, conditioned on code observations. This approach provides the basis for statistical inference. In that, we dispose of simple analytical formulations of the predictor and the mean squared error of the predictions. The metamodel is then validated before being used. 

Once the predictive metamodel has been built, it can be used to perform uncertainty propagation and in particular, estimate probabilities or, as here, quantiles  (see \cite{cannamela_controlled_2008}, \cite{lorenzo_assessment_2011}). This measure of risk will be designated from now on as our quantity of interest. The most trivial approach to estimate a quantile with a Gp metamodel, called plug-in approach, is to apply the quantile definition to the predictor of the metamodel. As the expectation of the Gp mean is a deterministic function of the input, this provides a deterministic expression of the quantile but no confidence intervals. Moreover for high quantiles, this methods tends to substantially underestimate the true quantile \cite{cannamela_controlled_2008}. To assess this problem,
\cite{oakley_estimating_2004} has proposed to take into account the covariance structure of the Gp metamodel. The quantile definition is therefore applied to the global Gp metamodel and yields a random variable, whose expectation can be considered as the quantile estimator and its variance an indicator of the accuracy of its prediction. This full-Gp approach leads to confidence intervals. In practice, the estimation of a quantile with the full Gp approach is based on stochastic simulations (conditional on the learning sample) of the Gp metamodel.

This overall methodology yields the estimation of the $p$-quantile of the model output. In nuclear safety, as in other engineering domains, methods of conservative computation of quantiles have been largely studied \cite{hessling_robustness_2015}. Though the above construction work largely increases the robustness of the metamodel, the evaluation of the quantile remains tainted by the uncertainty of the input distributions. For simplicity, the inputs probability densities are usually chosen in parametric families (uniform, normal, log-normal, etc. See for instance Table \ref{tab: Constraints for CATHARE model}). The distribution's parameters are itself set with the available information coming from data and/or an expert opinion. This information is often reduced to an input’s mean value or a variance. Nevertheless, the distribution choice differs inevitably from reality. This uncertainty on the input probability densities is propagated to the quantile, hence, different choices of distributions will lead to different quantile values, thus different safety margins. 

In this work, we propose to gain robustness on the quantification of this measure of risk. We aim to account for the uncertainty on the input distributions by evaluating the maximum quantile over a class of probability measures $\mathcal{A}$. In this optimization problem, the set $\mathcal{A}$ must be large enough to effectively represent our uncertainty on the inputs, but not too large in order to keep the estimation of the quantile representative of the physical phenomena. For example, the maximum quantile over the very large class $\mathcal{A}=\left\lbrace \text{all distributions} \right\rbrace$, proposed in \citep{huber_use_1973}, will certainly be too conservative to remain physically meaningful. Several articles which discuss possible choices of classes of distributions can be found in the literature of Bayesian robustness (see \cite{ruggeri_robust_2005}). \cite{deroberts_bayesian_1981}, consider a class of measures specified by a type of upper and lower envelope on their density. \cite{sivaganesan_ranges_1989} study the class of unimodal distributions. In more recent work, \cite{owhadi_optimal_2013} propose to optimize the measure of risk over a class of distributions specified by constraints on their \textit{generalized} moments. Their work (and ours) is called Optimal Uncertainty Quantification (OUQ), because given a set of assumptions and information, there exist optimal bounds on uncertainties. In practical engineering cases, the available information on an input distribution is often reduced to the knowledge of its mean and/or variance. This is why in this paper, we are interested in a specific case of the framework introduced by \cite{owhadi_optimal_2013}. We consider the class of measures known by some of their \textit{classical} moments, which we refer to as the moment class:
\begin{eqnarray}
	\mathcal{A} & = & \bigg\{ \mu = \otimes \mu_i \in \bigotimes_{i=1}^{d} \mathcal{P} ([l_i,u_i])\; | \; \mathbb{E}_{\mu_i}[x^j] = c_{i}^{(j)} \ ,  \label{eq: Optimization set} \\ & & \quad  c_{i}^{(j)}\in\mathbb{R}, \text{ for } 1\leq j\leq N_i \text{ and } 1\leq i\leq  d \bigg\}\ , \nonumber
\end{eqnarray}
where $\mathcal{P} ([l_i,u_i])$ denotes the set of scalar probability measure on the interval $[l_i, u_i]$. This set traduces simply that each random input is restricted to a given range and has some of its moments fixed. The tensorial product of measure sets traduces the mutual independence of the $d$ inputs. The choice of range enforcement is not a very strong hypothesis, as input variables represent physical parameters which are rarely unbounded.

The solution to our optimization problem is numerically computed thanks to the OUQ reduction theorem (\cite{owhadi_optimal_2013}, \cite{winkler_extreme_1988}). This theorem states that the measure corresponding to the extremal CDF (can be extended to quantile through Proposition \ref{THM : DUALITY THEOREM}), is located on the extreme points of the distribution set. In the context of the moment class, the extreme distributions are located on the $d$-fold product of finite convex combinations of Dirac masses:
\begin{eqnarray}
	\mathcal{A}_\Delta & = & \left\lbrace \mu\in\mathcal{A} \; | \; \mu_i = \sum_{k=1}^{N_i+1} w_{i}^{(k)} \delta_{x_{i}^{(k)}}\ \text{ for }\  1\leq i\leq d\right\rbrace\ , \label{eq: Discrete Optimization set}
\end{eqnarray}
To be more specific it holds that when $N$ pieces of information are available on the moments of a scalar measure $\mu$, it is enough to pretend that the measure is supported on at most $N+1$ points. This powerful theorem gives the basis for practical optimization of our optimal quantity of interest. In this matter, Semi-Definite-Programming \citep{henrion_gloptipoly_2009} has been already already explored by \cite{betro_robust_2005} and \cite{lasserre_moments_2010}, but the deterministic solver used rapidly reaches its limitation as the dimension of the problem increases. One can also find in the literature a Python toolbox developed by \cite{mckerns_optimal_2012} called Mystic framework that fully integrates the OUQ framework. However, it was built as a generic tool for generalized moment problems and the enforcement of the moment constraints is not optimal. By restricting the work to our moment class, we propose an original and practical approach based on the theory of canonical moments \citep{dette_theory_1997}. Canonical moments of a measure can be seen as the relative position of its moment sequence in the moment space. It is inherent to the measure and therefore presents many interesting properties. It allows to explore very efficiently the optimization space $\mathcal{A}_{\Delta}$, were the maximum quantile is to be found. Hence, we rewrite the optimization problem on the highly constrained set $\mathcal{A}_{\Delta}$ into a simplified and constraints free optimization problem.

The paper proceeds as follows. Section \ref{sec: OUQ principles} describes the OUQ framework and the OUQ reduction theorem. In Section \ref{sec: Methodology}, we then describe step by step the algorithm calculating our quantity of interest with the canonical moments parameterization. We present in Section \ref{sec: Modified algorithm}, an extended algorithm to deal with inequality constraints on the moments. Section \ref{sec: Numerical tests on toy example} and \ref{sec: Real case study} are dedicated to the application of our algorithm to a toy example, and to the peak cladding temperature for the IBLOCA application presented in the introduction. Section \ref{sec: Conclusion} gives some conclusions and perspectives.

\section{OUQ principles} 
\label{sec: OUQ principles}

\subsection{Duality transformation} 
In this work, we consider the quantile of the output of a computer code $G:\mathbb{R}^d\rightarrow \mathbb{R}$, seen as a black box function. As we said, in order to gain robustness on the risk measurement, our goal is to find the maximum quantile over the moment class $\mathcal{A}$ described in Equation (\ref{eq: Optimization set}). The objective value writes:
\begin{eqnarray}
    \overline{Q}_\mathcal{A}(p) & = & \sup_{\mu \in \mathcal{A}} \bigg[ \inf \left\lbrace h  \in \mathbb{R}\ | \ F_{\mu}(h) \geq p \right\rbrace \bigg] \label{eq : Objective value} \ , \\
                                & = & \sup_{\mu \in \mathcal{A}} \bigg[ \inf \left\lbrace h  \in \mathbb{R}\ | \ \mathbb{P}_{\mu} (G(X) \leq h) \geq p \right\rbrace \bigg] \ , \nonumber
\end{eqnarray}
The objective value written as a quantile \eqref{eq : Objective value} is not very convenient to work with. In order to applied the OUQ reduction Theorem \ref{th: OUQ reduction theorem} \citep{owhadi_optimal_2013}, one must optimize an affine functional of the measure. In particular, it is necessary to optimize a probability instead of a quantile. The following result, illustrated in Figure \ref{fig : Duality transformation}, can be interpreted as a duality transformation of our optimization problem \eqref{eq : Objective value}, into the optimization of a probability of failure (\textsc{p.o.f}). The proof is postponed to \ref{app: proof}. 
\begin{proposition} The following duality result holds
    \[ \overline{Q}_\mathcal{A}(p) = \inf \left\lbrace h  \in \mathbb{R}\; | \; \inf_{\mu\in\mathcal{A}} F_{\mu}(h) \geq p \right\rbrace\ . \]
    \label{THM : DUALITY THEOREM}
\end{proposition}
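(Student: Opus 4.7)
Write $Q_\mu(p) := \inf\{h\in\mathbb{R} \mid F_\mu(h)\geq p\}$ for the $p$-quantile of $\mu$, and set
\[
  L := \sup_{\mu\in\mathcal{A}} Q_\mu(p), \qquad
  R := \inf\bigl\{h\in\mathbb{R} \;\big|\; \textstyle\inf_{\mu\in\mathcal{A}} F_\mu(h)\geq p \bigr\}.
\]
The goal is the double inequality $L\leq R$ and $L\geq R$. The underlying elementary fact I will use throughout is the standard equivalence
\[
  Q_\mu(p)\leq h \;\Longleftrightarrow\; F_\mu(h)\geq p,
\]
valid for every probability measure $\mu$ on $\mathbb{R}$; the nontrivial implication ``$\Leftarrow$'' is immediate from the definition of $Q_\mu(p)$ as an infimum, while ``$\Rightarrow$'' relies on the right-continuity of the cumulative distribution function $F_\mu$ (if $F_\mu(h)<p$, right-continuity supplies a neighbourhood $[h,h+\eta)$ on which $F_\mu<p$, forcing $Q_\mu(p)\geq h+\eta>h$).

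\emph{Easy direction }$L\leq R$. Let $h$ be any admissible point in the set defining $R$, i.e.\ $\inf_{\mu\in\mathcal{A}} F_\mu(h)\geq p$. Then for \emph{every} $\mu\in\mathcal{A}$ one has $F_\mu(h)\geq p$, and by the equivalence above, $Q_\mu(p)\leq h$. Taking the supremum over $\mu$ yields $L\leq h$, and then taking the infimum over such admissible $h$ gives $L\leq R$.

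\emph{Harder direction }$L\geq R$. Fix any $h<R$. By the definition of $R$ as an infimum, $h$ does not belong to the set $\{h'\mid \inf_\mu F_\mu(h')\geq p\}$, so $\inf_{\mu\in\mathcal{A}} F_\mu(h)<p$. Hence there exists $\mu_h\in\mathcal{A}$ with $F_{\mu_h}(h)<p$. By the contrapositive of the equivalence above, $Q_{\mu_h}(p)>h$. Therefore $L\geq Q_{\mu_h}(p)>h$. Letting $h\uparrow R$ gives $L\geq R$.

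\emph{Main obstacle and remarks.} The conceptual subtlety, and the only place the proof is not just chasing definitions, is the passage $F_\mu(h)<p \Rightarrow Q_\mu(p)>h$ (strict inequality, not just $\geq$), which requires right-continuity of $F_\mu$. Degenerate cases are harmless: if the set defining $R$ is empty then $R=+\infty$, but then for every $h\in\mathbb{R}$ the argument above produces $\mu_h$ with $Q_{\mu_h}(p)>h$, so $L=+\infty$ as well; conversely, if the set equals $\mathbb{R}$ then $R=-\infty$ and the inequality $L\geq R$ is vacuous. No topological or compactness hypothesis on $\mathcal{A}$ is required — the duality is a purely order-theoretic consequence of the Galois-type correspondence between $h$ and $p$ encoded in the equivalence $Q_\mu(p)\leq h \Leftrightarrow F_\mu(h)\geq p$.
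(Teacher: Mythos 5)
Your proof is correct and follows essentially the same route as the paper: both establish the two inequalities via the standard equivalence $Q_\mu(p)\leq h \Leftrightarrow F_\mu(h)\geq p$, with your easy direction identical to the paper's first step and your harder direction simply the contrapositive form of the paper's second step. The only difference is cosmetic: you make explicit the role of right-continuity of $F_\mu$, which the paper's proof uses implicitly without comment.
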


\begin{figure}[htb]
    \centering
     \begin{tikzpicture}[xscale=2.5, yscale=3.5]
        \draw[->] (0,0) -- (3.2,0);
        \draw[->] (0,0) -- (0,1.05);
        \node[below left] at (0,0) {$0$};
        \node[left] at (0,1) {$1$};
        \node[left] at (0,0.4) {$F_\mu (h)$};
        \node[below right] at (3.2,0) {$h$};
        \draw[domain=0:1.92, color=black, samples=200] plot(\x,{min(0.829*sqrt(sqrt(\x)) - 0.05*sin(deg(3*\x)),1)});
        \draw[color=black] (1.92,1) -- (3.2,1);
        \draw[domain=0:3.2, color=black, samples=100] plot(\x,{min(0.7*ln(\x+1)+0.05*sin(deg(3*\x)),1)});
        \draw[domain=0:3.2, color=black] plot(\x,{0.04*exp(\x)-0.04});
        \draw[dashed] (0,0.75) -- (2.98,0.75);
        \draw[dashed] (0.8,0) -- (0.8,0.75);		
        \draw[dashed] (1.98,0) -- (1.98,0.75);
        \draw[dashed] (2.98,0) -- (2.98,0.75);
        \node[left] at (0,0.75) {$p$};
        \node[below, color=black] at (0.8,0) {$x_2$};
        \node[below, color=black] at (1.98,0) {$x_1$};
        \node[below, color=black] at (2.98,0) {$x_{max}$};
        \draw (2.6,0.2) node[below] {$\inf F_{\mu}(h)$} to[out=100, in=-10] (2.385,0.365);
    \end{tikzpicture}
    \caption{Illustration of the duality result \ref{THM : DUALITY THEOREM}. The lower curve represents the CDF lower envelope; we can see that the maximum quantile $x_{max}$ is actually the quantile of the lowest CDF.}
	\label{fig : Duality transformation}
\end{figure}
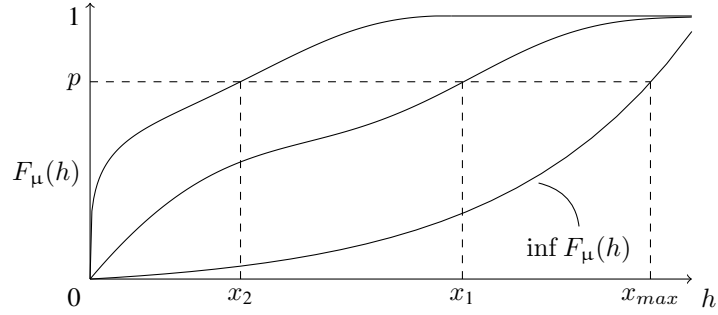

Proposition \ref{THM : DUALITY THEOREM} reads like this: the optimal quantile over a class of distributions is equal to the quantile of the CDF lower envelope.
Our problem is therefore to evaluate the lowest probability of failure $\inf_{\mu \in \mathcal{A}} \mathbb{P}_{\mu}(G(X)\leq h)$ for a given threshold $h$.

\subsection{Reduction Theorem}
Under the form of Proposition \ref{THM : DUALITY THEOREM}, the OUQ reduction theorem applies (see \cite{owhadi_optimal_2013}, \cite{winkler_extreme_1988}). It states that the optimal solution of the \textsc{p.o.f} optimization is a product of discrete measures. A general form of the theorem reads as follows:

\begin{theorem}[OUQ reduction {\cite[p.37]{owhadi_optimal_2013}}]
	Suppose that $\mathcal{X} := \mathcal{X}_1 \times \dots \times \mathcal{X}_d$ is a product of Radon spaces. Let 
	\[\mathcal{A} := \left\lbrace (G,\mu)\ \begin{array}{|l}
		G:\mathcal{X}\rightarrow \mathcal{Y} \text{, is a real valued measurable function}, \\
		\mu = \mu_1 \otimes \dots \otimes \mu_p \in \bigotimes_{i=1}^d \mathcal{P}(\mathcal{X}_i) \ ,\\
		\text{for each G, and for some measurable functions }\\ \varphi_l:\mathcal{X}\rightarrow \mathbb{R} \text{ and } \varphi_i^{(j)}:\mathcal{X}_i \rightarrow \mathbb{R}\ , \\
		\qquad \qquad \bullet\ \mathbb{E}_{\mu} [\varphi_l] \leq 0 \text{ for } l=1,\dots,N_0\ ,  \\
		\qquad \qquad \bullet\ \mathbb{E}_{\mu_i}[\varphi_i^{(j)}] \leq 0 \text{ for } j=1, \dots, N_i \text{ and } i=1,\dots, d 
	\end{array}\right\rbrace\]
	Let $\Delta_n(\mathcal{X})$ be the set of all discrete measure supported on at most $n+1$ points of $\mathcal{X}$, and
	\[\mathcal{A}_{\Delta} := \left\lbrace (G,\mu) \in \mathcal{A} \ |\ \mu_i \in \Delta_{N_0+N_i}(\mathcal{X}_i) \right\rbrace \ .\]
	Let $q$ be a measurable real function on $\mathcal{X}\times \mathcal{Y}$. Then
	\[\displaystyle \sup_{(G,\mu)\in\mathcal{A}} \mathbb{E}_{\mu}[q(X,G(X))] = \sup_{(G,\mu)\in\mathcal{A}_{\Delta}} \mathbb{E}_{\mu}[q(X,G(X))]\ .\]
	\label{th: OUQ reduction theorem}
\end{theorem}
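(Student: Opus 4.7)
The plan is to establish the reduction by iteratively reducing one marginal at a time, exploiting two key properties: (i) for each fixed $G$ and each fixed choice of $\mu_j$ with $j\neq i$, the objective functional $\mu_i \mapsto \mathbb{E}_{\mu}[q(X,G(X))]$ is affine in $\mu_i$; and (ii) under the same fixings, the admissible set for $\mu_i$ alone is a convex subset of $\mathcal{P}(\mathcal{X}_i)$ cut out by finitely many affine inequality constraints on expectations.

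First I would fix $G$ and $\mu_2,\dots,\mu_d$ and study the admissible set for $\mu_1$ alone. By Fubini, each global constraint $\mathbb{E}_{\mu}[\varphi_l]\leq 0$ collapses to an affine inequality on $\mu_1$ of the form $\mathbb{E}_{\mu_1}[\tilde\varphi_l]\leq 0$, where $\tilde\varphi_l(x_1):=\int \varphi_l(x_1,x_2,\dots,x_d)\,d(\mu_2\otimes\cdots\otimes\mu_d)$. Combined with the $N_1$ purely marginal constraints $\mathbb{E}_{\mu_1}[\varphi_1^{(j)}]\leq 0$, this yields a convex set defined by $N_0+N_1$ affine inequalities, with objective $\mathbb{E}_{\mu_1}[\tilde q_1]$ affine in $\mu_1$, where $\tilde q_1(x_1):=\int q(x,G(x))\,d(\mu_2\otimes\cdots\otimes\mu_d)$.

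Next I would invoke the classical extreme-point characterization due to Winkler (in the lineage of Richter--Rogosinski--Karlin--Shapley): the extreme points of a set of probability measures on a Radon space constrained by $N$ affine (in)equality constraints on expectations are discrete measures supported on at most $N+1$ points. Combined with Choquet's representation theorem, any admissible $\mu_1$ can be written as a barycenter of extreme points, and affineness of the objective then gives
\[ \mathbb{E}_{\mu_1}[\tilde q_1] \;\leq\; \sup_{\nu \in \mathrm{ext}} \mathbb{E}_{\nu}[\tilde q_1]\ . \]
Hence the supremum over $\mu_1$ is unchanged upon restricting to $\mu_1\in\Delta_{N_0+N_1}(\mathcal{X}_1)$. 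Iterating the argument for $\mu_2,\dots,\mu_d$, and using the trivial inclusion $\mathcal{A}_\Delta\subset\mathcal{A}$ for the reverse inequality, yields equality of the two suprema.

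The main obstacle is justifying the extreme-point characterization and the Choquet representation in this generality: the Radon-space hypothesis is essential to secure tightness and a weak-$*$ topology in which the admissible set is compact, so that extreme points actually exist and the barycentric integral is well defined. A secondary subtlety is that the supremum need not be attained, so one must phrase the argument at the level of suprema of affine functionals rather than of maxima, and verify that each marginal-by-marginal reduction preserves feasibility in the remaining coordinates through the iteration.
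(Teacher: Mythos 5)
Your overall architecture is the same as the one behind the cited result: the paper does not reprove Theorem \ref{th: OUQ reduction theorem} but attributes it to \cite{owhadi_optimal_2013} and describes its proof exactly as you propose, namely a coordinate-by-coordinate recursion in which, with $G$ and the other marginals frozen, the global constraints collapse by Fubini to $N_0$ affine constraints on the active marginal, which together with its $N_i$ own constraints puts you in the setting of Winkler's classification of extreme points of moment sets ($N_0+N_i$ constraints, hence at most $N_0+N_i+1$ support points, matching $\Delta_{N_0+N_i}(\mathcal{X}_i)$), and affineness of the objective in that marginal lets you pass to extreme points without changing the supremum.

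The one step whose justification would fail as you state it is the appeal to compactness plus classical Choquet theory. In the generality of the theorem the admissible set for a single marginal is typically \emph{not} weak-$*$ compact: the $\mathcal{X}_i$ may be unbounded Radon spaces and the constraint functions $\varphi_l$, $\varphi_i^{(j)}$ are only assumed measurable, so nothing prevents mass from escaping to infinity and the level sets $\{\mathbb{E}_{\mu_i}[\tilde\varphi]\leq 0\}$ need not be closed. This is precisely why Winkler's results (and the von Weizs\"acker--Winkler integral representation for measure-affine functionals on such moment sets) are invoked in the original proof: they give both the extreme-point characterization and the barycentric bound $\mathbb{E}_{\mu_i}[\tilde q_i]\leq\sup_{\nu\in\mathrm{ext}}\mathbb{E}_{\nu}[\tilde q_i]$ \emph{without} compactness, the Radon hypothesis serving to make the measures inner regular and the moment set measure-convex rather than to make it compact. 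So you should replace your compactness argument by a direct appeal to Winkler's representation theorem; note that in the application actually made in this paper ($\mathcal{X}_i=[l_i,u_i]$ compact, polynomial moment constraints, hence a weak-$*$ compact admissible set) your simpler compactness route is legitimate. Your remaining caveats (suprema need not be attained, feasibility preserved along the recursion since the collapsed constraints are exactly the global ones with the other marginals fixed) are correctly identified and unproblematic.
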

This theorem derives from the work of \cite{winkler_extreme_1988}, who has shown that the extreme measures of a moment class $\{ \mu \in \mathcal{P}(\mathcal{X}) \ | \ \mathbb{E}_\mu [\varphi_1] \leq 0 , \allowbreak \dots , \mathbb{E}_\mu [\varphi_n] \allowbreak \leq 0 \}$ are the discrete measures that are supported on at most $n+1$ points. The strength of Theorem \ref{th: OUQ reduction theorem} is that it extends the result to a tensorial product of moment sets. The proof relies on a recursive argument using Winkler's classification on every set $\mathcal{X}_i$. A remarkable fact is that, as long as the quantity to be optimized is an affine function of the underlying measure $\mu$, this theorem remains true whatever the function $G$ and the quantity of interest $q$ are.
Applying Theorem \ref{th: OUQ reduction theorem} to the optimization of the probability of failure, it is rewritten as:
\begin{eqnarray}
\inf_{\mu \in \mathcal{A}} F_{\mu}(h)\ & = & \inf_{\mu \in \mathcal{A}_\Delta} F_{\mu}(h)\ , \nonumber \\
 & = & \inf_{\mu \in \mathcal{A}_\Delta} \mathbb{P}_{\mu} (G(X)\leq h) \ , \nonumber \\ & = & \inf_{\mu \in \mathcal{A}_\Delta} \sum_{i_1=1}^{N_1+1} \dots \sum_{i_d=1}^{N_d+1} \omega_{1}^{(i_1)} \dots \omega_{d}^{(i_d)}\ \mathbbm{1}_{\{G(x_{1}^{(i_1)}, \dots, x_{d}^{(i_d)}) \leq h\}} \ ,
\label{eq: Probability of faillure sum of weights}
\end{eqnarray}

\section{Parameterization using canonical moments}
\label{sec: Methodology}
The optimization problem in Equation (\ref{eq: Probability of faillure sum of weights}) shows that the weights and positions of the input distributions provide a natural parameterization for the computation of the \textsc{p.o.f}. However, in order to compute the lowest \textsc{p.o.f}, one must be able to explore the whole set of admissible measures $\mathcal{A}_{\Delta}$. Two ways to handle the problem appear. The first one consists in optimizing the objective value $F_{\mu}(h)$ under constraints, that is $\mu \in \mathcal{A}_{\Delta}$: this is the approach taken by \cite{mckerns_optimal_2012} with the Mystic Framework. The second option, always favored when possible, consists in reformulating the objective function. This requires to identify a new parameterization adapted to the problem. Here, canonical moments \citep{dette_theory_1997} provide a surprisingly well tailored reparameterization.

The work on canonical moments was first introduced by \cite{skibinsky_range_1967}. His main contribution covered the original study of the geometric aspect of general moment space \citep{skibinsky_maximum_1977}, \citep{skibinsky_principal_1986}. In a number of further papers, Skibinsky proves numerous other interesting properties of the canonical moments.  \cite{dette_theory_1997} have shown the intrinsic relation between a measure $\mu$ and its canonical moments. They highlight the interest of canonical moments in many areas of statistics, probability and analysis such as problem of design of experiments, or the Hausdorff moment problem \citep{hausdorff_momentprobleme_1923}. In the following, we describe step by step the algorithm used to transform the optimization problem of Equation \ref{eq: Probability of faillure sum of weights} under the canonical moments parameterization.

\subsection{Step 1. From classical moments to canonical moments}
\label{subsec: Step 1}
We enforce some moments on the input distributions of the code $G$. In this section, we present how to transform these \textit{classical} moment constraints, into canonical moments constraints.  

We define the moment space $M := M(a,b) = \{ \mathbf{c}(\mu) \ | \ \mu\in\mathcal{P}([a,b])\}$ where $\mathbf{c}(\mu)$ denote the sequence of all moments of some measure $\mu$. The $n$th moment space $M_n$ is defined by projecting $M$ onto its first $n$ coordinates, $M_n = \{ \mathbf{c}_n(\mu) = (c_1,\dots, c_n)\; | \; \mu\in\mathcal{P}([a,b])\}$. $M_2$ is depicted in Figure \ref{fig: Moment set M2}. We first define the extreme values, 
\begin{align*}
    & c_{n+1}^{+} = \max \left\lbrace c\in \mathbb{R} : (c_1,\dots, c_n,c) \in M_{n+1} \right\rbrace \ ,\\
    & c_{n+1}^{-} = \min \left\lbrace c\in \mathbb{R} : (c_1,\dots, c_n,c) \in M_{n+1} \right\rbrace \ ,
\end{align*}
which represent the maximum and minimum values of the $(n+1)$th moment that a measure can have, when its moments up to order $n$ equal to $\mathbf{c}_n$ . The $n$th canonical moment is then defined recursively as 
\begin{equation}
	p_n=p_n(\mathbf{c})=\frac{c_n-c_n^{-}}{c_n^{+}-c_n^{-}}\ .
	\label{eq: Canonical moment definition}
\end{equation}

\begin{figure}
    \centering        
    \begin{tikzpicture}[scale=4]
        \draw (0,0) rectangle (1,1);
        \draw[domain=0:1, fill=gray!30!white, opacity=0.4] plot(\x,{\x*\x});
        \draw[opacity=0.4] (0,0) -- (1,1);
        \node[left] at (0,1) {$1$};
        \node[below left] at (0,0) {$0$};
        \node[below] at (1,0) {$1$};
        \node at (0.77,0.7) {$M_2$};
        \node[below] at (0.6,0) {$c_1$};
        \node[left] at (0,0.6) {$c_2^+ = c_1$};
        \node[left] at (0,0.36) {$c_2^- = c_1^2$};
        \draw[dashed] (0,0.6) -- (0.6,0.6) -- (0.6,0);
        \draw[dashed] (0,0.36) -- (0.6,0.36);
    \end{tikzpicture}
    \caption{The moment set $M_2$ and definition of $c_2^+$ and $c_2^-$ for $(a,b)=(0,1)$.}
	\label{fig: Moment set M2}
\end{figure}
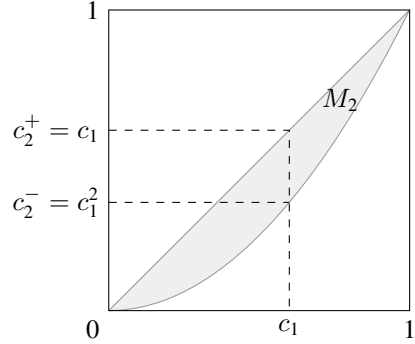

Note that the canonical moments are defined up to the degree $N = N(\mathbf{c}) = \min \left\lbrace \right. n \in \mathbb{N} \ |\ \mathbf{c}_n  \in \partial  M_n \left. \right\rbrace $, and $p_N$ is either $0$ or $1$. Indeed, we know from \citep[Theorem 1.2.5]{dette_theory_1997} that $\mathbf{c}_n \in \partial M_n$ implies that the underlying $\mu$ is uniquely determined, so that, $c_n^+ = c_n^-$. We also introduce the quantity $\zeta_n = (1-p_{n-1})p_n$ that will be of some importance in the following. The very nice properties of canonical moments are that they belong to $[0,1]$ and are invariant by any affine transformation of the support of the underlying measures. Hence, we may restrict ourselves to the case $a=0$, $b=1$. 

Therefore, for every $i=1,\dots, d$, the support of the measure $\mu_i$ is transformed into $[0,1]$ using the affine transformation $y = l_i +(u_i - l_i)x$. The sequences of moments of the corresponding measures are written $\mathbf{c}'_i = ({c'}_{i}^{(1)}, \dots, {c'}_{i}^{(N_i)})$ where ${c'}_{i}^{(j)}$ reads
\begin{equation}
{c'}_{i}^{(j)} = \frac{1}{(u_i- l_i)^{j}} \sum_{k=0}^{j} \binom{j}{k} (- l_i)^{j -k} c_{i}^{(k)} \ , \quad \text{ for } j = 1,\dots, N_i\, .
\label{eq: Moment's affine transformation}
\end{equation} 
Given a sequence of moment constraints $(c_{i}^{(j)})_{1\leq j \leq N_i}$ enforced to the $i$th input, it is then possible to calculate the corresponding sequence of canonical moments $\mathbf{p}_i = (p_{i}^{(1)} , \dots, p_{i}^{(N_i)})$. \cite[p. 29]{dette_theory_1997} propose a recursive algorithm named \textit{Q-D algorithm} that allows this computation. It drastically fastens the computational time compared to the raw formula that consists of computing Hankel determinants \citep[p. 32]{dette_theory_1997}.
In practical applications, we wish to enforce low order of moments, typically order 2 (see for instance Table \ref{tab:initial distribution hydraulic model}). In this case we dispose of the simple analytical formulas 
\begin{eqnarray*}
    p_1 = c_1 & \quad ,\quad & p_2=\frac{c_2-c_1^2}{c_1(1-c_1)} \ .
\end{eqnarray*} 
One can easily see that enforcing $N_i$ moments or $N_i$ canonical moments to the $i$th input is equivalent. Indeed, Equations  (\ref{eq: Canonical moment definition}) and (\ref{eq: Moment's affine transformation}) can be inverted.

\subsection{Step 2. From canonical moments to support points}
\label{subsec: Step 2}
From a given sequence of canonical moments, one wishes to reconstruct the support of a discrete measure. 
We introduce the Stieltjes Transform, which connects canonical moments of a measure to its support. The Stieltjes transform \citep{dette_theory_1997} of a scalar measure $\mu$ is defined as 
\[S(z)=S(z,\mu) = \int_a^b \frac{d\mu (x)}{z-x}\ , \quad (z\in\mathbb{C}\backslash\{\text{supp}(\mu)\}) \ .\]
The transform $S(z, \mu)$ is an analytic function of $z$ in $\mathbb{C} \backslash \text{supp}(\mu)$. If $\mu$ has a finite support then 
\[S(z)= \int_a^b \frac{d\mu (x)}{z-x}= \sum_{i=1}^n \frac{\omega_i}{z-x_i}\ , \]
where the support points of the scalar measure $\mu$ are distinct and denoted by $x_1,\allowbreak \dots, x_n$, with corresponding weights $\omega_1,\dots, \omega_n$. Alternatively, the weights are given by $\omega_i = \lim_{z \rightarrow x_i} (z-x_i)S(z)$. We can rewrite the transform as a ratio of two polynomials with no common zeros. The zeros of the denominator being the support of $\mu$. 
\begin{equation}
S(z) = \frac{Q^{(n-1)}(z)}{P_*^{(n)} (z)} \ ,
\label{eq : Stieljes transform Q/P}
\end{equation} 
where $P_*^{(n)} (z)= \prod_{i=1}^n (z - x_i)$ and 
\[ \omega_i = \frac{Q^{(n-1)}(x_i)}{\frac{d}{dx} P_*^{(n)} (x) |_{x=x_i}}\ . \]

The Stieljes transform can also be written as a continuous fraction, some basic definitions and properties of continuous fraction are postponed to \ref{app: continous fraction}. 

\begin{theorem}[{\cite[Theorem 3.3.1]{dette_theory_1997}}]
Let $\mu$ be a probability measure on the interval $[a,b]$ and $z\in \mathbb{C} \backslash [a,b]$, then the Stieltjes transform of $\mu$ has the continued fraction expansion (see Appendix \ref{app: continous fraction} for notation):
\begin{align*}
S(z) & = \cfrac{1}{z-a} \ -\  \cfrac{\zeta_1 (b-a)}{1} \ -\  \cfrac{\zeta_2 (b-a)}{z-a} \ -\  \dots \label{eq : Stieljes transform}\\
& = \cfrac{1}{z-a- \zeta_1 (b-a)} \ -\  \cfrac{\zeta_1 \zeta_2 (b-a)^2}{z-a - (\zeta_2+ \zeta_3)(b-a)} \\ &  -\  \cfrac{\zeta_3 \zeta_4 (b-a)^2}{z-a- (\zeta_4 +\zeta_5) (b-a)} \ -\  \dots \nonumber
\end{align*} 
\label{Th: Stieljes transform}
Where we recall that $\zeta_n := p_{n-1}(1-p_n)$.
\end{theorem}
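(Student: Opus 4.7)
The plan is to reduce to a normalized interval and then invoke the classical correspondence between Stieltjes transforms, orthogonal polynomials, and J-fractions, with the canonical moments entering through the Jacobi recurrence coefficients.

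First, I would exploit affine invariance of canonical moments. Under the change of variables $y = a + (b-a)x$, a measure $\mu$ on $[a,b]$ pushes forward to a measure $\tilde\mu$ on $[0,1]$ with the same canonical moments $(p_n)$, and the Stieltjes transforms are related by $S(z,\mu) = \frac{1}{b-a}\,S\!\left(\frac{z-a}{b-a},\tilde\mu\right)$. Substituting this identity into the target continued fraction, the factors $(b-a)$ and $(b-a)^2$ appearing in the numerators of the theorem are precisely what the rescaling produces; so it suffices to prove the theorem for $[a,b]=[0,1]$.

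Second, on $[0,1]$ I would construct the monic orthogonal polynomials $P_n$ of $\mu$ together with their associated polynomials $Q_n$ (same three-term recurrence, shifted initial conditions). The classical Markov/Stieltjes theorem yields
\begin{equation*}
S(z) = \cfrac{1}{z-\alpha_0 - \cfrac{\beta_1}{z-\alpha_1 - \cfrac{\beta_2}{z-\alpha_2 - \cdots}}},
\end{equation*}
where $(\alpha_n,\beta_n)$ are the Jacobi coefficients in $P_{n+1}(z) = (z-\alpha_n)P_n(z) - \beta_n P_{n-1}(z)$. Convergence on $\mathbb{C}\setminus[0,1]$ follows from the Markov theorem since $\mu$ is compactly supported.

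Third, and this is the heart of the proof, I would identify $\alpha_n$ and $\beta_n$ in terms of the canonical moments. The target identities are $\alpha_0 = \zeta_1$, $\alpha_n = \zeta_{2n}+\zeta_{2n+1}$ for $n\geq 1$, and $\beta_n = \zeta_{2n-1}\zeta_{2n}$. The cleanest route is via the Hankel determinant formulas of Skibinsky: one expresses $\underline{H}_{2n}$ and $\underline{H}_{2n+1}$ (and their shifted analogues built from $xd\mu$ and $(1-x)d\mu$) as products of $\zeta_k$'s and $(1-\zeta_k)$'s, then plugs these into the standard formulas $\beta_n = \underline{H}_{n+1}\underline{H}_{n-1}/\underline{H}_n^2$ and the analogous ratio for $\alpha_n$. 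Equivalently, one may proceed by induction on $n$ using the two Christoffel-type transformations $\mu \mapsto xd\mu/c_1$ and $\mu \mapsto (1-x)d\mu/(1-c_1)$, each of which shifts the canonical moment sequence by one, and tracks how this shift acts on the Jacobi coefficients.

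Fourth, once $\alpha_n$ and $\beta_n$ are identified in terms of $\zeta_n$, substituting into the J-fraction and rewriting it in the equivalent S-fraction (Stieltjes) form recovers precisely the alternating-numerator continued fraction in the statement, both in its unfolded form with $\zeta_n(b-a)$ and in its folded pairwise form with numerators $\zeta_{2k-1}\zeta_{2k}(b-a)^2$. The main obstacle is the combinatorial step three: proving the Jacobi coefficients have exactly the stated expressions in $\zeta_n$ requires a nontrivial Hankel-determinant computation (or, equivalently, a careful double induction through the two Christoffel transforms), and this is where the bulk of the work lies. The remaining algebra, including the $(b-a)$-rescaling and the folding of the S-fraction, is routine.
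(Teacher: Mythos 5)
The paper itself offers no proof of this statement: it is quoted verbatim from the cited reference (Dette and Studden, Theorem 3.3.1), so there is no internal argument to compare against. Judged on its own, your outline is sound and follows essentially the same route as the cited book: affine reduction to $[0,1]$ (your scaling identity $S(z,\mu)=\frac{1}{b-a}S\bigl(\frac{z-a}{b-a},\tilde\mu\bigr)$ is correct and does generate the $(b-a)$ and $(b-a)^2$ factors), the Markov/Stieltjes J-fraction built from the monic orthogonal polynomials, identification of the Jacobi coefficients as $\alpha_0=\zeta_1$, $\alpha_n=\zeta_{2n}+\zeta_{2n+1}$, $\beta_n=\zeta_{2n-1}\zeta_{2n}$, and contraction/unfolding between the S-form and the J-form. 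Be aware, however, that what you call the heart of the proof is precisely the content of the reference's own preparatory results (the recurrence $P_{n+1}(x)=(x-\zeta_{2n}-\zeta_{2n+1})P_n(x)-\zeta_{2n-1}\zeta_{2n}P_{n-1}(x)$, echoed in the paper's Equation for $P_\ast^{(k+1)}$), so your plan defers rather than supplies the nontrivial computation; a complete write-up would have to carry out the Hankel-determinant or Christoffel-transform induction you sketch. Two smaller points: convergence via Markov's theorem covers the contracted (J-fraction) form, but convergence of the expanded S-form with numerators $\zeta_n(b-a)$ needs its own (classical, but separate) justification, e.g.\ Stieltjes' theory for nonnegative bounded partial numerators; and your coefficient formulas implicitly use the convention $\zeta_n=(1-p_{n-1})p_n$ (so $\zeta_1=p_1=c_1=\alpha_0$), which matches the paper's Section 3.1 but not the line appended to the theorem statement, $\zeta_n:=p_{n-1}(1-p_n)$ --- that line is a typo you should not adopt.
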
 

Theorem \ref{Th: Stieljes transform} states that the Stieltjes transform can be computed when one knows the canonicals moments. It immediately follows from Equation (\ref{eq : Stieljes transform Q/P}), Theorem \ref{Th: Stieljes transform} and Lemma \ref{Lem : Continued Fraction} that we have the following recursive formula for $P_\ast^{(n)}$
\begin{equation}
 P_{\ast}^{(k+1)} (x)=(x-a-(b-a)(\zeta_{2k}+\zeta_{2k+1}))P_{\ast}^{(k)}(x)-(b-a)^2 \zeta_{2k-1} \zeta_{2k} P_{\ast}^{(k-1)}(x) \ ,
\label{eq : Recursive formula polynomial}
\end{equation}
where $P_{\ast}^{(-1)}=0$, $P_\ast^{(0)}=1$. The support of $\mu$ thus consists of the roots of $P_\ast^{(n)}$. This obviously leads to the following theorem.

\begin{theorem}[{\cite[Theorem 3.6.1]{dette_theory_1997}}]
Let $\mu$ denote a measure on the interval $[a, b]$ supported on $n$ points with canonical moments $p_1 , p_2 , \dots$. Then, the support of $\mu$ is the set of $\{x:P_\ast^{(n)}(x)=0\}$ defined by Equation (\ref{eq : Recursive formula polynomial}).
\label{th : Dette Generation of measure}
\end{theorem}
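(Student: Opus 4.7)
The plan is to reconcile two representations of the Stieltjes transform $S(z,\mu)$. On one hand, because $\mu$ is discrete with exactly $n$ atoms, equation (\ref{eq : Stieljes transform Q/P}) gives $S(z) = Q^{(n-1)}(z)/P_\ast^{(n)}(z)$ as a reduced rational function whose denominator vanishes precisely on the support $\{x_1,\dots,x_n\}$ (the absence of common zeros being forced by the positivity of the weights $\omega_i$). On the other hand, Theorem \ref{Th: Stieljes transform} expresses the same $S(z)$ as a continued fraction whose coefficients are read off the canonical moments via the $\zeta_k$'s. The goal is to show that the denominator of the $n$-th convergent of this continued fraction satisfies the recursion (\ref{eq : Recursive formula polynomial}) and must coincide, up to the monic normalization, with $P_\ast^{(n)}$.

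First, I would feed the partial numerators and denominators of the continued fraction in Theorem \ref{Th: Stieljes transform} — that is, $(b-a)^2\zeta_{2k-1}\zeta_{2k}$ and $z - a - (b-a)(\zeta_{2k}+\zeta_{2k+1})$ — into the generic three-term recurrence for denominators of convergents given by Lemma \ref{Lem : Continued Fraction} in Appendix \ref{app: continous fraction}. A direct substitution with the base cases $P_\ast^{(-1)} = 0$ and $P_\ast^{(0)} = 1$ produces exactly the recursion (\ref{eq : Recursive formula polynomial}); a short induction then shows that $P_\ast^{(k)}$ is a monic polynomial of degree $k$ in $z$.

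Second, I would argue that the continued fraction truncates after $n$ stages. Since $\mu$ is supported on $n$ points, its moment vector reaches the boundary of the moment space at the appropriate level, and by Theorem 1.2.5 of \citep{dette_theory_1997} (already invoked in the paper) the corresponding canonical moment lies in $\{0,1\}$. This forces one of the $\zeta$'s to vanish, which collapses the tail of the continued fraction and makes the $n$-th convergent equal to $S(z)$ exactly. Uniqueness of the reduced rational representation of $S(z)$ then identifies the monic polynomial produced by the recurrence with $P_\ast^{(n)}(z) = \prod_{i=1}^n (z-x_i)$, so its roots are precisely the atoms of $\mu$.

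The main obstacle is the bookkeeping around the termination step: one has to pin down exactly which canonical-moment index corresponds to an $n$-point support and verify that setting the associated $\zeta$ to zero truncates the continued fraction so that the resulting convergent has denominator of degree exactly $n$. Aligning the indices of the partial quotients with the paired form $\zeta_{2k}+\zeta_{2k+1}$, $\zeta_{2k-1}\zeta_{2k}$ appearing in (\ref{eq : Recursive formula polynomial}) is mechanical but must be done carefully; once this alignment is in place, everything else reduces to the standard three-term recurrence together with the uniqueness of reduced rational expressions.
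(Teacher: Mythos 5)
Your proposal is correct and follows essentially the same route the paper takes: the paper obtains the recursion (\ref{eq : Recursive formula polynomial}) by combining Equation (\ref{eq : Stieljes transform Q/P}), Theorem \ref{Th: Stieljes transform} and Lemma \ref{Lem : Continued Fraction} --- precisely the convergent-denominator argument you spell out --- and otherwise defers to \cite[Theorem 3.6.1]{dette_theory_1997}. The termination bookkeeping you flag does go through (an $n$-point support puts the moment sequence on the boundary so that $\zeta_{2n-1}\zeta_{2n}=0$, truncating the continued fraction at the $n$-th convergent), and your monic-degree-$n$ plus coprimality argument then correctly identifies that convergent's denominator with $P_\ast^{(n)}$.
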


In the following we consider a fixed sequence of moments $\mathbf{c}_n = (c_1, \dots,\allowbreak c_n)\allowbreak \in M_n$, let $\mu$ be a measure supported on at most $n+1$ points, such that its moments up to order $n$ coincide to $\mathbf{p}_n=(p_1, \dots, p_n)$ the corresponding sequence of canonical moments related to $\mathbf{c}_n$, as described in Section \ref{subsec: Step 1}. 
Corollary \ref{Th : Generation of measure} is the moment version of Theorem \ref{th : Dette Generation of measure}. 
The only difficulty compared to Theorem \ref{th : Dette Generation of measure} is that one tries to generate admissible measures supported on \textit{at most} $n+1$ Dirac masses. Given a measure supported on strictly less than $n+1$ points, the question is therefore to know whether it makes sense to evaluate the $n+1$ roots of $P_{\ast}^{(n+1)}$. A limit argument is used for the proof.

\begin{corollary}
    Consider a sequence of moments $\mathbf{c}_n = (c_1,\allowbreak \dots,\allowbreak c_n) \in M_n$, and the set of measures 
    \[ \mathcal{A}_\Delta = \left\lbrace \mu = \sum_{i=1}^{n+1} \omega_i \delta_{x_i} \in \mathcal{P}([a,b]) \ | \ \mathbbm{E}_{\mu}(x^j) = c_j, \ j=1,\dots,n \right\rbrace \ .\]
    We define 
    \[ \Gamma=\left\lbrace (p_{n+1}, \dots, p_{2n+1})\in [0,1]^{n+1} \ | \ p_i\in \{0,1\} \Rightarrow p_k=0 ,\ k > i \right\rbrace\ . \]
	Then there exists a bijection between $\mathcal{A}_{\Delta}$ and $\Gamma$.
\label{Th : Generation of measure}
\begin{figure}[htb]
	\centering
    \begin{tikzpicture}[remember picture, scale=3]
		\node (A) at (-0.7,0) {$\mu\in\mathcal{A}_\Delta$};
		\node (B) at (0.7,0) {$P_{\ast}^{(n+1)}$};
		\node (C) at (0,-0.8) {$\Gamma$};
		\draw[->] (A) to node[below, rotate=-50, text width=3.5cm, scale=0.75]{canonical moments $(p_{n+1},\dots, p_{2n+1})$} (C) ;
		\draw[->] (C) to node[below, rotate=50, scale=0.75]{{iterative formula}} (B);
		\draw[->] (B) to node[above, scale=0.75]{{roots provide support}} (A);
	\end{tikzpicture}
	\caption{Relation between the set of admissible measures and the canonical moments.}
\end{figure}
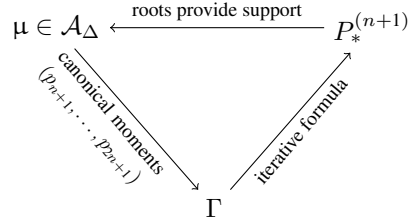
\end{corollary}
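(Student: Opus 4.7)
My approach is to construct explicit maps in both directions and then verify they are mutual inverses, with a separate limit argument to handle the boundary cases where the measure is supported on strictly fewer than $n+1$ points.

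\emph{Forward map $\Phi:\Gamma\to\mathcal{A}_\Delta$.} Given $(p_{n+1},\dots,p_{2n+1})\in\Gamma$, I would concatenate it with the canonical moments $(p_1,\dots,p_n)$ produced from $\mathbf{c}_n$ via the procedure of Step 1, obtaining a sequence in $[0,1]^{2n+1}$. Feeding the associated $\zeta$-values into the recursion (\ref{eq : Recursive formula polynomial}) produces the polynomial $P_\ast^{(n+1)}$; Theorem \ref{th : Dette Generation of measure} then identifies its roots with the support of the corresponding measure $\mu$, and the weights are recovered from the partial-fraction decomposition of the Stieltjes transform given by (\ref{eq : Stieljes transform Q/P}). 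By construction, the first $n$ canonical moments of $\mu$ agree with those attached to $\mathbf{c}_n$, so $\mathbb{E}_\mu[x^j]=c_j$ for $j\le n$, and $\mu$ is supported on at most $n+1$ points. Hence $\Phi(p_{n+1},\dots,p_{2n+1})\in\mathcal{A}_\Delta$.

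\emph{Inverse map.} Conversely, any $\mu=\sum_{i=1}^{n+1}w_i\delta_{x_i}\in\mathcal{A}_\Delta$ admits a well-defined canonical moment sequence whose first $n$ entries are $(p_1,\dots,p_n)$. If $\mu$ has exactly $n+1$ atoms, all of $p_{n+1},\dots,p_{2n+1}$ lie in $(0,1)$; if $\mu$ has only $k<n+1$ atoms, the canonical sequence terminates at some index $N\le 2k-1<2n+1$ with $p_N\in\{0,1\}$ by \citep[Theorem 1.2.5]{dette_theory_1997}, and I complete the tuple by setting $p_j:=0$ for $j>N$. This is exactly the convention enforced by the defining condition of $\Gamma$, so the assignment $\Psi:\mu\mapsto(p_{n+1},\dots,p_{2n+1})$ is a well-defined map into $\Gamma$.

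\emph{Mutual inverses and the main obstacle.} In the interior, where no $p_j$ lies in $\{0,1\}$, the identities $\Phi\circ\Psi=\mathrm{id}$ and $\Psi\circ\Phi=\mathrm{id}$ follow immediately from Theorem \ref{th : Dette Generation of measure} together with the uniqueness of the canonical moment representation of a measure. The delicate part, which I expect to be the main technical obstacle, is the boundary case: the recursion (\ref{eq : Recursive formula polynomial}) still returns a degree-$(n+1)$ polynomial even when the underlying measure has only $k<n+1$ atoms, so I must show that the $n+1-k$ ``spurious'' roots of $P_\ast^{(n+1)}$ either coincide with true support points or acquire zero residue in (\ref{eq : Stieljes transform Q/P}) and therefore do not alter $\mu$. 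My plan is a continuity argument: approximate any boundary tuple by tuples in the open cube $(0,1)^{n+1}$, apply the interior bijection to each approximant, and pass to the limit using continuity of the roots of $P_\ast^{(n+1)}$ and of the residues as functions of the canonical moments. The convention ``$p_i\in\{0,1\}\Rightarrow p_k=0$ for $k>i$'' built into $\Gamma$ is precisely what makes this limit unique and upgrades the surjective correspondence into a genuine bijection.
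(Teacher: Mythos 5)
Your construction is essentially the paper's argument: both treat the generic case of full support via the recursion (\ref{eq : Recursive formula polynomial}) and Theorem \ref{th : Dette Generation of measure}, recover the weights from the residues of the Stieltjes transform (\ref{eq : Stieljes transform Q/P}), use the convention built into $\Gamma$ to make the correspondence one-to-one, and handle degenerate measures by a limit/continuity argument in which the spurious roots of $P_\ast^{(n+1)}$ carry vanishing weight (the paper phrases this via equicontinuity of the Stieltjes transform under weak convergence, you via continuity of roots and residues in the canonical moments). One minor caveat, shared with the paper's own sketch: a measure with exactly $n+1$ atoms can still yield boundary canonical moments when its support contains $a$ or $b$ (termination at index $2n$ or $2n+1$), so your dichotomy ``$n+1$ atoms $\Leftrightarrow$ interior tuple'' is slightly too strong, but the same limiting argument covers these cases.
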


\begin{proof}
Without loss of generality we can always assume $a=0$ and $b=1$ as the problem is invariant using affine transformation. We first consider the case where $\text{card}(\text{supp}(\mu))$ is exactly $n+1$. From Theorem \ref{th : Dette Generation of measure}, the polynomial $P_{\ast}^{(+1)}$ is well defined with $n+1$ distinct roots corresponding to the support of $\mu$. Notices that this implies that $(p_1, \dots, p_{2n-1})$ belongs to $]0,1[^{2n-1}$ and that $p_{2n}, p_{2n+1}$ or $p_{2n+2}$ belong to $\{ 0,1 \}$. 

Now, the functions $g(x,z) = 1/(z-x)$ are equicontinuous for $z$ in any compact region which has a positive distance from $[0,1]$. The Stieljes transform is a finite sum of equicontinuous functions and therefore also equicontinuous. Thus if a measure $\mu$ converges weakly to $\mu^*$, the convergence must be uniform in any compact set with positive distance from $[0,1]$ (see \cite{royden_real_1968}). It is then always possible to restrict ourselves to measures of cardinal $m < n+1$, by letting $p_k$ converge to $0$ or $1$ for $2m-2\leq k \leq 2m $. Note that by doing so the polynomials $P_\ast^{(m)}$ and $P_{\ast}^{(n+1)}$ will have the same roots. But, $P_{\ast}^{(n+1)}$ and $Q^{(n-1)}$ will have some others roots of multiplicity strictly equal (see Equation (\ref{eq : Stieljes transform Q/P}) and (\ref{eq : Recursive formula polynomial})). The corresponding weights of these roots are vanishing, so that the measures extracted from $P_\ast^{(m)}$ and $P_{\ast}^{(n+1)}$ are the same. 
\end{proof}
\begin{remark}
From a computational point of view, as the proof relies on a limit argument, we can always generate $p_k \in ]0,1[\ ,\  \text{ for }  n+1\leq k \leq 2n+1 $. This prevents the condition $p_k \in \{ 0,1\} \Rightarrow p_j = 0$ for $j>k$. 
\end{remark}

We use Section \ref{subsec: Step 1} to transform the $N_i$ constraints on the moments of the $i$th input into $N_i$ canonical moment constraints. However, the construction of the polynomial $P_{i\ast}^{(N_i+1)}$ requires the sequence $(p_1^{(i)}, \dots, p_{2N_i +1}^{(i)})$. The $N_i$ first canonical moments of this sequence are known by the constraints, while the canonical moments $(p_k)_{N_i+1\leq k \leq 2N_i+1}\in\Gamma$ constitute $N_i+1$ free parameters, in equal number to the cardinal of $\mu_i$. The computation of $\Gamma$ is very simple, it is basically done by random generation of $N_i+1$ numbers in $]0,1[$, yet it allows to generate the support of all the measures in $\mathcal{A}_{\Delta}$. This provides a very nice parameterization of the problem that takes naturally into account the constraints. 

We also provide a geometrical interpretation of this results. Consider the support of point $(x_1,\dots,x_{n+1})$ of a discrete measure in $\mathcal{A}_\Delta$. The set of support point for all measure in $\mathcal{A}_\Delta$ is a manifold in $\mathbb{R}^{n+1}$. Corollary \ref{Th : Generation of measure} described the structure of the manifold and states it is an algebraic variety. Indeed, it is the zero locus of the set of polynomial $P_\ast^{(n)}$ built from a sequence of canonical moments in $(p_1,\dots,p_n)+\Gamma$.

\subsection{Step 3. From support points to weights}
\label{subsec: Step 3}
From the positions of a discrete measure $\mu$ in $\mathcal{A}_{\Delta}$ generated in Section \ref{subsec: Step 2}. We easily recover the associated weights. Indeed, we enforce $N_i$ constraints on the moments of a scalar measure $\mu_i = \sum_{j=1}^{N_i+1} \omega_j^{(i)} \delta_{x_j^{(i)}}$, supported by at most $N_i+1$ points according to Theorem \ref{th: OUQ reduction theorem}. A noticeable fact is that as soon as the $N_i+1$ support points of the distribution are set, the corresponding weights are uniquely determined. Indeed, the $N_i$ constraints lead to $N_i$ equations, and one last equation derives from the measure mass equals to 1. For each $1\leq i \leq d$, the following $N_i+1$ linear equations holds
\begin{equation}
\left\lbrace \begin{array}{lclcll}
\omega_i^{(1)} & + & \dots & + & \omega_{i}^{(N_i+1)} & = 1 \\
\omega_i^{(1)} x_i^{(1)} & + & \dots & + & \omega_{i}^{(N_i+1)} x_{i}^{(N_i+1)} & = c_i^{(1)} \\
\multicolumn{1}{c}{\vdots}  & & & & \multicolumn{1}{c}{\vdots} & \multicolumn{1}{c}{\vdots \quad}\\
\omega_i^{(1)} {x_i^{(1)}}^{N_i} & + & \dots & + & \omega_{i}^{(N_i+1)} {x_{i}^{(N_i+1)}}^{N_i} & = c_{i}^{(N_i)} 
\end{array} \right.
\label{eq : Vandermonde weight system}
\end{equation} 
The determinant of the previous system is a Vandermonde matrix. Hence, the system is invertible as long as the $(x_i^{(j)})_j$ are distinct. 

\subsection{Step 4. Computation of the objective function}
\label{subsec: Step 4}
Thanks to Sections \ref{subsec: Step 1}, \ref{subsec: Step 2}, and \ref{subsec: Step 3}, we can compute the positions $(x_i^{(j)})_{1\leq j \leq N_i+1}$ and the weights $(\omega_i^{(j)})_{1\leq j \leq N_i+1}$ of the $i$th input of some $\mu \in\mathcal{A}_{\Delta}$. We can therefore compute the following probability of failure (in Equation (\ref{eq: Probability of faillure sum of weights})):
\begin{eqnarray*}
	 \mathbb{P}_{\mu} (G(X)\leq h) & = & \sum_{i_1=1}^{N_1+1} \dots \sum_{i_d=1}^{N_d+1} \omega_{1}^{(i_1)} \dots \omega_{d}^{(i_d)}\ \mathbbm{1}_{\{G(x_{1}^{(i_1)}, \dots, x_{d}^{(i_d)}) \leq h\}} \ ,
\end{eqnarray*}
We recall that the positions and consequently the weights, were determined using Corollary \ref{Th : Generation of measure} from a sequence of canonical moments $(p_k)_{N_i+1\leq k \leq 2N_i+1}\allowbreak\in\Gamma$. So that, the exploration of $\mathcal{A}_{\Delta}$ is parameterized with canonical moments in $\Gamma$. No constraints need to be enforced, as a discrete measure generated from the canonical moments naturally satisfies the moment constraints. The \textsc{p.o.f} is then optimized globally using a differential evolution solver \citep{price_differential_2005}. Algorithm \ref{Alg: p.o.f} summarizes step 1 to step 4 in order to compute the lowest probability of failure (\ref{eq: Probability of faillure sum of weights}). The main cost of the algorithm arises from the high number of metamodel calls for $G$, evaluated on a $d$-dimensional grid of size $\prod_{i=1}^d (N_i +1)$.

\begin{algorithm}[ht]
  \SetKwInOut{Input}{Input}

  \Input{- lower bounds, $\mathbf{l} = (l_1, \dots, l_d)$\\
	 - upper bounds, $\mathbf{u} = (u_1, \dots, u_d)$\\
   - constraints sequences of moments, $\mathbf{c}_i = (c_{i}^{(1)}, \dots, c_{i}^{(N_i)})$ and its corresponding sequences of canonical\\
    moments, $\mathbf{p}_i = (p_{i}^{(1)}, \dots, p_{i}^{(N_i)})$ for $1\leq i \leq d$.\vspace{0.5cm}}
    \SetKw{Return}{return}
    \SetKwProg{Fn}{function}{}{end function}
    \Fn{\textsc{P.O.F}($p_{1}^{(N_1+1)}, \dots, p_{1}^{(2N_1+1)} ,\dots, p_{d}^{(N_d+1)} , \dots, p_{d}^{(2N_d+1)}$)}{
      \For{$i=1,\dots, d$}{
        \For{$k=1,\dots, N_i$}{
          $P_{i\ast}^{(k+1)}=(X-l_i-(u_i-l_i)(\zeta_{i}^{2k} +\zeta_{i}^{(2k+1)}))P_{i\ast}^{(k)}-(u_i-l_i)^2 \zeta_{i}^{(2k-1)} \zeta_{i}^{(2k)} P_{i\ast}^{(k-1)}$\;
          }
          $x_{i}^{(1)} , \dots, x_{i}^{(N_i+1)} = \text{roots}(P_{i}^{*(N_i+1)})$\;
          $\omega_{i}^{(1)}, \dots, \omega_{1}^{(N_i+1)} = \text{weight}(x_{i}^{(1)}, \dots, x_{1}^{(N_i+1)} , \mathbf{c}_i)$ \;
        }
    
    \Return $\sum_{i_1=1}^{N_1+1} \dots \sum_{i_d=1}^{N_d+1} \omega_{1}^{(i_1)} \dots \omega_{d}^{(i_d)} \ \mathbbm{1}_{\{G(x_{1}^{(i_1)} , \dots, x_{d}^{(i_d)}) \leq h\}}$\;}
  \caption{Calculation of the \textsc{p.o.f}}
  \label{Alg: p.o.f}
\end{algorithm}

\section{Modified algorithm for inequality constraints}
\label{sec: Modified algorithm}
In the following, we consider inequality constraints for the moments. The optimization set reads
\[\mathcal{A}=\left\lbrace \mu = \otimes \mu_i \in \bigotimes_{i=1}^{d} \mathcal{P}([l_i,u_i])\; | \; \alpha_{i}^{(j)} \leq \mathbb{E}_{\mu_i}[x^{j}] \leq \beta_{i}^{(j)}\ , \ 1\leq j\leq N_i \right\rbrace\ . \]
One can notice that $\alpha_{i}^{(j)} \leq \mathbb{E}_{\mu_i}[x^{j}] \leq \beta_{i}^{(j)}$ is equivalent to enforcing two constraints, thus drastically increasing the dimension of the problem. However, it is possible to restrict ourselves to one constraint. Considering the convex function $\varphi_{i}^{(j)} : x \mapsto (x^{j}-\alpha_{i}^{(j)})(x^{j}-\beta_{i}^{(j)})$, Jensen's inequality states that $\varphi_{i}^{(j)}(\mathbb{E}_{\mu_i}(x)) \leq \mathbb{E}_{\mu_i}(\varphi_{i}^{(j)}(x))$. Therefore, the sole constraint $\mathbb{E}(\varphi_{i}^{(j)}(x)) \leq 0$ ensures $\alpha_{i}^{(j)} \leq \mathbb{E}_{\mu_i}[x^{j}] \leq \beta_{i}^{(j)}$. Without loss of generality we still consider measures $\mu_i$ that are convex combinations of $N_i+1$ Dirac masses, for $i=1, \dots, d$.

We now propose a modified version of Algorithm \ref{Alg: p.o.f} to solve the problem with inequality constraints. For $i = 1, \dots, d$, we denote the moments lower bounds $\boldsymbol{\alpha}_i = (\alpha_{i}^{(1)}, \dots, \alpha_{i}^{(N_i)})$ and the moments upper bounds $\boldsymbol{\beta}_i = (\beta_{i}^{(1)} , \dots, \beta_{i}^{(N_i)})$. We use Equation \eqref{eq: Moment's affine transformation} to calculate the corresponding moment sequence $\boldsymbol{\alpha}'_i$ and $\boldsymbol{\beta}'_i$ after affine transformation to $[0,1]$.

\begin{algorithm}[ht]
  \SetKwInOut{Input}{Input}

  \Input{- lower bounds, $\mathbf{l} = (l_1, \dots, l_d)$\\
	 - upper bounds, $\mathbf{u} = (u_1, \dots, u_d)$\\
   - moments lower bounds, $\boldsymbol{\alpha}'_i = ({\alpha'_{i}}^{(1)} , \dots, {\alpha'_{i}}^{(N_i)})$ for $i=1,\dots,d$ \\ 
   - moments upper bounds, $\boldsymbol{\beta}'_i = ({\beta'_{i}}^{(1)} , \dots, {\beta'_{i}}^{(N_i)})$ for $i=1,\dots,d$\vspace{0.5cm}}
    \SetKw{Return}{return}
    \SetKwProg{Fn}{function}{}{end function}
    \Fn{\textsc{P.O.F} (${c'_{1}}^{(1)} , \dots, {c'_{1}}^{(N_1)}, p_{1}^{(N_1+1)} , \dots, p_{1}^{(2N_1+1)} ,\dots, {c'_{d}}^{(1)},\dots, {c'_{d}}^{(N_d)}, p_{d}^{(N_d+1)}, \dots, p_{d}^{(2N_d+1)}$)}{
      \For{$i=1,\dots, d$}{
        \For{$k=1,\dots, N_i$}{$p_{i}^{(k)} = f({c'_{i}}^{(1)} , \dots {c'_{i}}^{(k)}) $ \tcc*{f transform moments to canonical moments}}
        \For{$k=1,\dots, N_i$}{
          $P_{i\ast}^{(k+1)} = (X-l_i-(u_i-l_i)(\zeta_{i}^{(2k)} +\zeta_{i}^{(2k+1)}))P_{i\ast}^{(k)}-(u_i-l_i)^2 \zeta_{i}^{(2k-1)} \zeta_{i}^{(2k)} P_{i\ast}^{(k-1)}$\;
          }
          $x_{i}^{(1)} , \dots, x_{i}^{(N_i+1)} = \text{roots}(P_{i}^{*(N_i+1)})$\;
          $\omega_{i}^{(1)}, \dots, \omega_{i}^{(N_i+1)} = \text{weight}(x_{i}^{(1)}, \dots, x_{i}^{(N_i+1)} , \mathbf{c}_i)$ \;
        }
    
    \Return $\sum_{i_1=1}^{N_1+1} \dots \sum_{i_d=1}^{N_d+1} \omega_{1}^{(i_1)} \dots \omega_{d}^{(i_d)} \ \mathbbm{1}_{\{G(x_{1}^{(i_1)} , \dots, x_{d}^{(i_d)}) \leq h\}}$\;}
  \caption{Calculation of the \textsc{p.o.f} with inequality constraints}
  \label{Alg : p.o.f inequality}
\end{algorithm}

The \textsc{p.o.f} of algorithm \ref{Alg : p.o.f inequality} has $d + 2\times \sum_{i=1}^d N_i$ arguments. The new parameters are actually the first ${(N_i)}_{|i=1,\dots,d}$th moments of the inputs that were previously fixed. A new step in the algorithm is needed to calculate the canonical moments up to degree $N_i$ for $i=1, \dots, d$. This ensures that the constraints are satisfied while the canonical moments from degree $N_i+1$ up to degree $2N_i+1$ can vary between $]0,1[$ in order to generate all possible measures. The increase of the dimension does not affect the computational times neither the complexity. Indeed, the main cost still arises from the large number of evaluation of the code $G$, that remains equal to $\prod_{i=1}^d (N_i+1)$. Once again this new \textsc{p.o.f} function can be optimized using any global solver. 

\section{Numerical tests on a toy example} 
\label{sec: Numerical tests on toy example}
\subsection{Presentation of the hydraulic model}
\label{subsec:hydraulic model}
In the following, we address a simplified hydraulic model \citep{pasanisi_estimation_2012}. This code calculates the water height $H$ of a river subject to a flood event. It takes four inputs whose initial joint distribution is detailed in Table \ref{tab:initial distribution hydraulic model}. It is always possible to calculate the \textit{plug-in} quantiles for those particular distributions. The result is given in Figure \ref{fig : Moment Order Constraints}, which present the initial CDF. However, as we desire to evaluate the robust quantile over a class of measures, we present in Table \ref{tab : Constraints for hydraulic model} the corresponding moment constraints that the variables must satisfy. The constraints are calculated based on the initial distributions, while the bounds are chosen in order to match the initial distributions most representative values.

\begin{table}[htb]
	\centering	\caption{Initial distribution of the 4 inputs of the hydraulic model.}
	\label{tab:initial distribution hydraulic model}
	\begin{tabular}{lrr}
		\toprule
		Variable & Description & Distribution \\ 
		\cmidrule(lr){3-3}\cmidrule(lr){2-2}
		$Q$ & annual maximum flow rate & $Gumbel(mode=1013, scale=558)$ \\
		$K_s$ & Manning-Strickler coefficient & $\mathcal{N}(\overline{x}=30,\sigma=7.5)$ \\
		$Z_v$ & Depth measure of the river downstream & $\mathcal{U}(49,51)$ \\
		$Z_m$ & Depth measure of the river upstream & $\mathcal{U}(54,55)$  \\
		\bottomrule
	\end{tabular}

\end{table}
\begin{table}[ht]
	\centering\caption{Corresponding moment constraints of the 4 inputs of the hydraulic model.}
	\label{tab : Constraints for hydraulic model}
	\begin{tabular}{lrrrr}
		\toprule
		Variable & Bounds & Mean & \specialcell{Second order \\ moment} & \specialcell{Third order \\ moment} \\ 
		\cmidrule(lr){2-2}\cmidrule(lr){3-3}\cmidrule(lr){4-4}\cmidrule(lr){5-5}%
		$Q$ & $[160, 3580]$ & $1320.42$ & $2.1632 \times 10^6$ & $4.18\times 10^9$ \\
		$K_s$ & $[12.55, 47.45]$ & $30$ & $949$ & $31422$ \\
		$Z_v$ & $[49,51]$ & $50$ & $2500$ & $125050$ \\
		$Z_m$ & $[54,55]$ & $54.5$ & $2970$ & $161892$ \\
		\bottomrule
	\end{tabular}

\end{table}
\noindent The height of the river $H$ is calculated through the analytical model
\begin{equation}
	H = \left(\frac{Q}{300K_s \sqrt{\frac{Z_m - Z_v}{5000}}} \right)^{3/5} \ .
	\label{eq:Hydraulic Model}
\end{equation}
We are interested in the flood probability $\sup_{\mu\in\mathcal{A}} \mathbb{P}(H \geq h)$.

\subsection{Maximum constraints order influence}
\begin{figure}[htb]
	\centering
	{\includegraphics[scale=0.35]{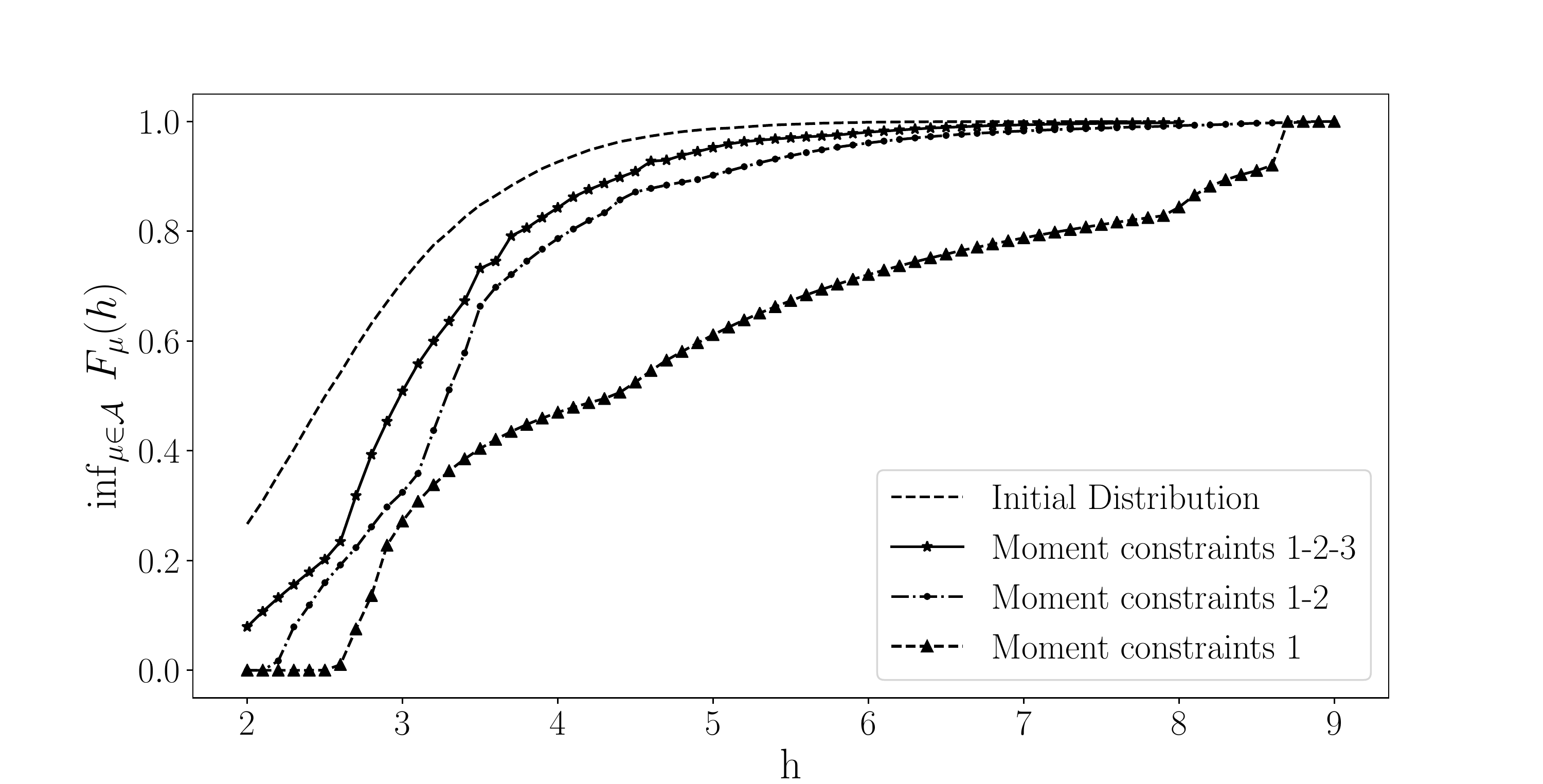}}
	\caption{$h$ is the height of the river. We observe the influence of the number of moment constraints enforced on the CDF lower envelope.} 
	\label{fig : Moment Order Constraints} 
\end{figure}
We compare the influence of the constraint order on the optimum. The initial distributions and the constraints enforced are available in Table \ref{tab : Constraints for hydraulic model}. The value of the constraints correspond to the moments of the initial distributions. Figure \ref{fig : Moment Order Constraints} shows how the size of the optimization space $\mathcal{A}$ decreases by adding new constraints. A differential evolution solver was used to perform the optimization. The initial CDF was computed with a Monte Carlo algorithm. One can observe that enforcing only one constraint on the mean will give a robust quantile significantly larger than the one of the initial distribution. On the other hand, adding three constraints on every inputs reduces quite drastically the space so that the optimal quantile found are close to the one of the initial CDF. A good compromise is to enforce two constraints on the first two moments. This is equivalent to enforcing the mean and the variance on the input distributions, which is a natural way of proceeding. Using this rule, the moment space is not to large so that the worst case quantile remains physically viable.

\subsection{Comparison with the Mystic framework}
We highlight the interest of the canonical moments parameterization by comparing its performances with the Mystic framework \citep{mckerns_optimal_2012}. Mystic is a Python toolbox suitable for OUQ. In Figure \ref{fig : Comparison Canonical Mystic HydraulicModel} one can see the comparison between Mystic and our algorithm. Both computations were realized with an identical solver, and computational times were similar ($\approx$30 min). We enforced one constraint on the mean of each input (see Table \ref{tab : Constraints for hydraulic model}). The performance of the Mystic framework is outperformed by our algorithm. Indeed, the generation of the weights and support points of the input distributions is not optimized in the Mystic framework. Hence, an intermediary transformation of the measure is needed in order to respect the constraints. During this transformation, the support points can be send out of bounds so that the measure is no more admissible. Many population vectors are rejected, which reduces the overall performance of the algorithm. Meanwhile, our algorithm warrants the exploration of the whole admissible set of measure without any vector rejection. 
\begin{figure}[htb]
	\centering
	{\includegraphics[scale=0.35]{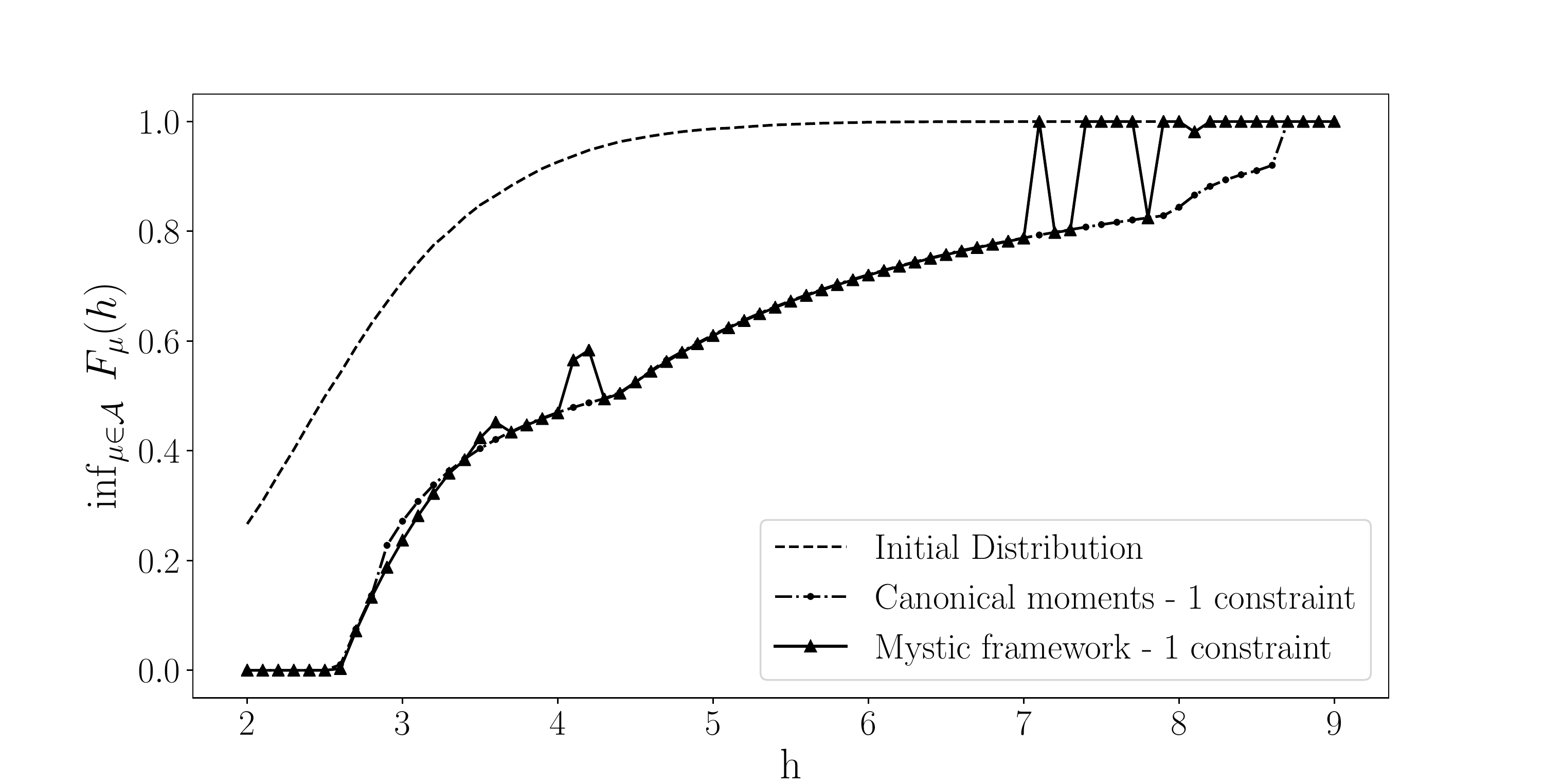}}
	\caption{Comparison of the performance of the Mystic framework and our algorithm on the hydraulic code.} 
	\label{fig : Comparison Canonical Mystic HydraulicModel} 
\end{figure}

\section{Application to the use-case}
\label{sec: Real case study}
\subsection{Presentation of the use-case}
Our use-case consists in thermal-hydraulic computer experiments, typically used in support of regulatory work and nuclear power plant design and operation. The numerical model is based on code CATHARE 2 (V2.5$\_$3mod3.1) which simulates the time evolution of physical quantities during a thermal hydraulic transient. The simulated accidental transient is an Intermediate Break Loss Of Coolant Accident (IBLOCA) with a break on the cold leg and no safety injection on the broken leg. One run of the computer code takes around 20 minutes on an ordinary computer. Nevertheless, in regards of both physical phenomena and dimensions of the system, we could consider our case under study as a simplified model with respect to a realistic modeling of a reactor. In this use-case, $d=27$ scalar inputs variables of CATHARE code are uncertain and defined by their probability density function. They correspond to various physical parameters, for instance: interfacial friction, critical flow rates, heat transfer coefficients, etc. They are all considered mutually independent. The output variable $Y$ is a single scalar which is the maximal peak cladding temperature (PCT) during the accident transient, see an example in Figure \ref{fig: RAW CATHARE}.
\begin{figure}[htb]
	\begin{minipage}[t]{0.48\linewidth}
		\centering
		\includegraphics[scale=0.18]{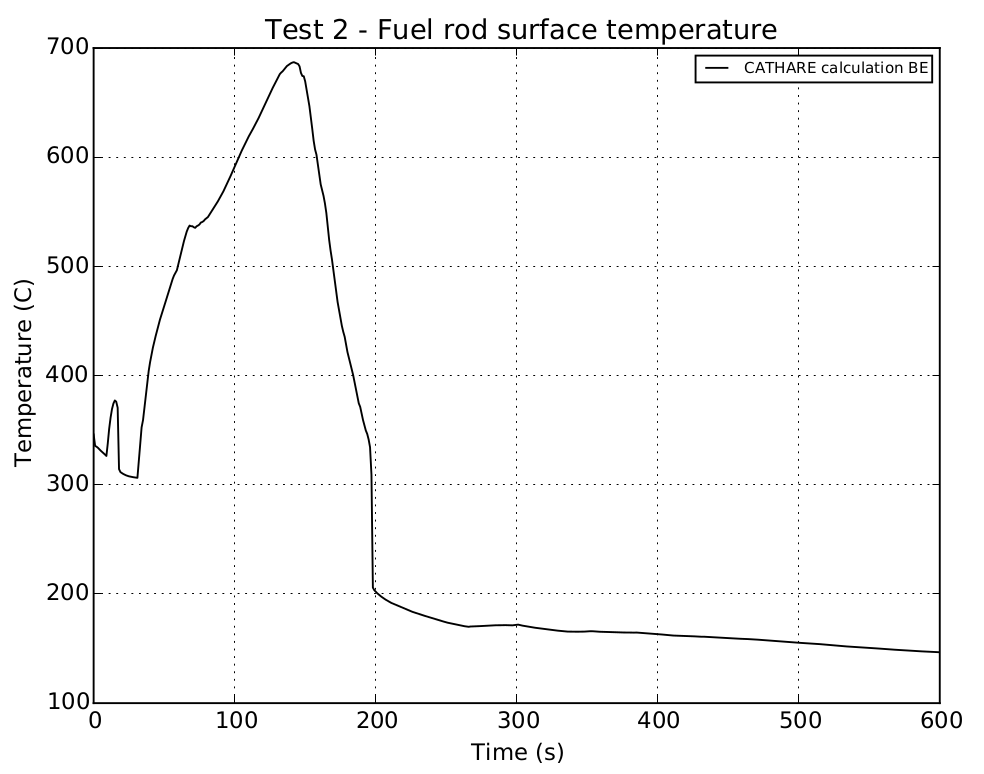}
		\caption{CATHARE temperature output for nominal parameters, the maximal temperature value is 687$^\circ$C after 140 secondes.}
		\label{fig: RAW CATHARE}
	\end{minipage}
	\hfill
	\begin{minipage}[t]{0.48\linewidth}
		\centering
		\includegraphics[scale=0.31]{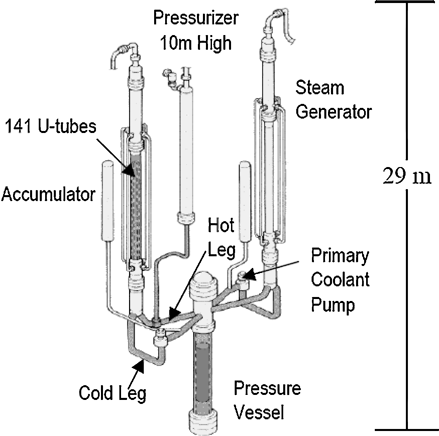}
		\caption{The replica of a water pressured reactor, with the hot and cold leg.}
		\label{fig: REP maquette}
	\end{minipage}
\end{figure}  

The number $n$ of simulations chosen for the design of experiments is a compromise between the CPU time required for each simulation and the number of input parameters. For uncertainty propagation and metamodel building purpose, it is a common rule to chose $n$ at least 10 times the dimension $d$ of the input vector \cite{loeppky_choosing_2009}. Here $n=1000$ simulations were performed using a space filling Latin Hypercube Sample (LHS) in dimension 27, thus providing a nice coverage of the high-dimensional input space \citep{fang_design_2005}.

A screening based on the Hilbert-Schmidt Independence Criterion (HSIC) dependence measure \citep{da_veiga_global_2013}, was performed on the $n=1000$ learning simulations. The hypothesis ``$\mathcal{H}_0^{(k)}$: the input $X_k$ and the output $Y$ are independent" was rejected for 9 inputs with a significance level $\alpha=0.1$. Those inputs, designated as PII, are given by Table \ref{tab: Constraints for CATHARE model}. The screening based on the HSIC takes into account the whole output variability. However, in some pathological case a low influential input for global variability may have an important impact on a $p$-quantile. A perspective to solve this issue, is to consider the conditional and target HSIC \cite[p.23]{raguet2018target}, this will be the aim of a future work.

\begin{table}[ht]
\centering
\caption{Corresponding moment constraints of the 9 most influential inputs of the CATHARE model \cite{iooss_advanced_2018}.}	\label{tab: Constraints for CATHARE model}
  \begin{tabular}{lrrrr}
    \toprule
		Variable & Bounds & Mean & \specialcell{Second \\ moment} & \specialcell{Initial distribution \\ (truncated on bounds)} \\ 
		\cmidrule(lr){2-2}\cmidrule(lr){3-3}\cmidrule(lr){4-4}\cmidrule(lr){5-5}%
		$n^\circ 10$ & $[0.1, 10]$ & $1.33$ & $3.02$ & $LogNormal(0,0.76)$   \\
		$n^\circ 22$ & $[0, 12.8]$ & $6.4$ & $45.39$ & $Normal(6.4,4.27)$  \\
		$n^\circ 25$ & $[11.1, 16.57]$ & $13.83$ & $192.22$ & $Normal(13.79$  \\
		$n^\circ 2$ & $[-44.9, 63.5]$ & $9.3$ & $1065$ & $Uniform(-44.9,63.5)$ \\
		$n^\circ 12$ & $[0.1, 10]$ & $1.33$ & $3.02$ & $LogNormal(0,0.76)$ \\
		$n^\circ 9$ & $[0.1, 10]$ & $1.33$ & $3.02$ & $LogNormal(0,0.76)$ \\
		$n^\circ 14$ & $[0.235, 3.45]$ & $0.99$ & $1.19$ & $LogNormal(-0.1,0.45)$ \\
		$n^\circ 15$ & $[0.1, 3]$ & $0.64$ & $0.55$ & $LogNormal(-0.6,0.57)$ \\
		$n^\circ 13$ & $[0.1, 10]$ & $1.33$ & $3.02$ & $LogNormal(0,0.76)$ \\
		\bottomrule
	\end{tabular}
\end{table}

The input distributions described in Table \ref{tab: Constraints for CATHARE model} derive from the CIRCE method \cite{damblin2019bayesian,decrecy_2001}, where in a Bayesian setting the posterior distributions are computed with respect to an experimental database. The moment presented in Table \ref{tab: Constraints for CATHARE model} correspond to the moments of the the calibrated distributions. Those are the moments enforced to the measures in the moment class. This way, we can compare the quantiles obtained through the \textit{plug-in} and \textit{full-Gp} approaches with the maximal quantile of the OUQ framework. Indeed, with these constraints the initial distribution is also an element of the moment class. 

The Gaussian process (Gp) is then build on the PII reduced space, conditioned from the available $n=1000$ learning simulations. We usually consider in computer experiments an anisotropic (stationary) covariance, and the covariance kernel is here chosen as a Mat\'ern 5/2. The metamodel accuracy is evaluated using the predictivity coefficient $Q^2$ \citep{gratiet_metamodel-based_2017}:
\[ Q^2 = 1- \frac{\sum_{i=1}^{n_{test}} (y_{i} - \widehat{y}_{i})^2}{\sum_{i=1}^{n_{test}} (y_{i} - \frac{1}{n_{test}}\sum_{i=1}^{n_{test}} y_{i} )^2} \]
where $(x_{i})_{1\leq i \leq n_{test}}$ is a test sample, $(y_{i})_{1\leq i \leq n_{test}}$ are the corresponding observed outputs and $(\widehat{y}_{i})_{1\leq i \leq n_{test}}$ are the metamodel predictions. We use a Leave one out strategy in order to perform the validation on the learning sample, we obtain $Q^2=0.92$. As already discussed in the introduction, once the predictive metamodel has been built, it can be used to perform uncertainty propagation and in particular, estimate quantiles.

\subsection{Results on the use-case}

Two constraints were enforced on the first two moments of each inputs as depicted in Table \ref{tab: Constraints for CATHARE model}. Indeed, informations on the input distributions are often reduced to the knowledge of the means and variances. Moreover, in regards to the toy example, only one constraints on the mean defines a moment class too large, which leads to a worst case scenario not physically viable. At last, more constraints increases the number of support points of each discrete measure on the extreme points, thus increases the computational cost. 

We successfully applied the methodology on the 9 dimensional restricted Gp metamodel of the CATHARE code. However, the computation was one day long for each threshold. We restricted the computation of the CDF to a small specific area of interest (high quantile 0.5-0.99) and we parallelized the task so that the computation did not exceed one week. One can compare, in Figure \ref{fig : Comparison Canonical Mystic CATHARE Model}, the results of the computation realized with the Mystic framework and our algorithm. Once again, the computations were performed with identical solvers. It confirms the difficulty for the Mystic framework to explore the whole space of admissible measures. Indeed, the maximal quantile obtained with the Mystic framework is lower than the result of our optimization. This proves the efficiency of the canonical moments parameterization to solve this optimization problem.
\begin{figure}[htb]
	\centering
	{\includegraphics[scale=0.35]{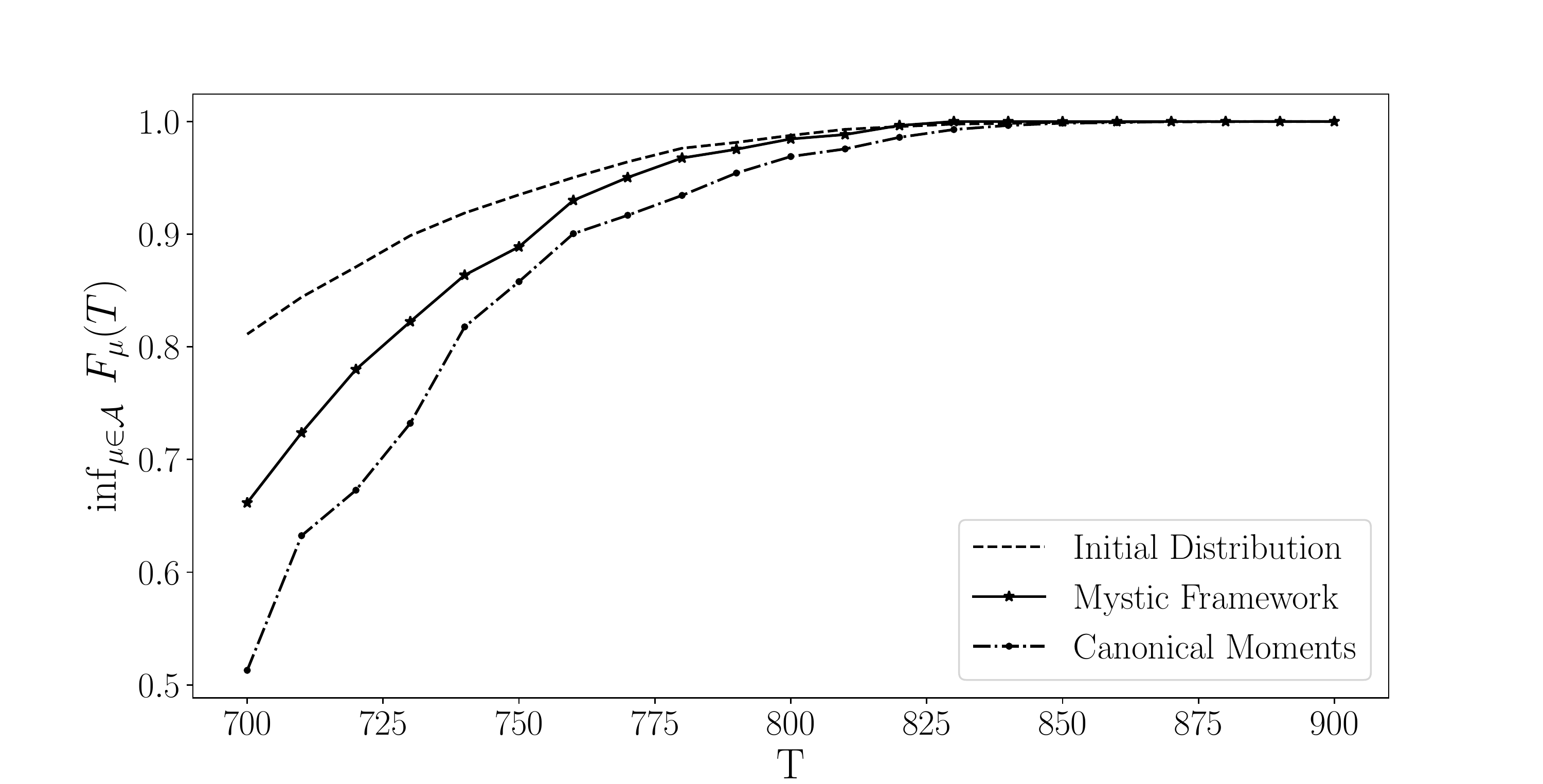}}
	\caption{Comparison of the performance and our algorithm on the 9 dimensional restricted CATHARE Gp metamodel. $h$ denotes the maximal peak cladding temperature during the accident transcient (see Figure \ref{fig: RAW CATHARE}).} 
	\label{fig : Comparison Canonical Mystic CATHARE Model} 
\end{figure}

One can compare the estimation of the 95$\%$-quantile of the peak cladding temperature for the IBLOCA application in Table \ref{tab: Quantile estimation}. A 90$\%$-confidence interval for the empirical quantile estimator was constructed with a bootstrap method. The \textit{plug-in} and \textit{full-Gp} approaches were defined in the introduction and correspond to a classical estimation of the quantile using respectively the predictor of the metamodel and the full Gaussian process \citep{iooss_advanced_2018}. The OUQ method corresponds to the maximal quantile, when the input distributions are only defined by two of their moments (see Table \ref{tab: Constraints for CATHARE model}). It is optimal in a sense that it minimize the uncertainty on the input distribution considering the available information. With this information, industrials are able to quantify the worst impact of the inputs uncertainty on the measure of risk, and adapt their choice of safety margins. 
\begin{table}[htb]
  \centering
	\caption{Results for the 95$\%$-quantile estimates.}
	\label{tab: Quantile estimation}
	\begin{tabular}{lrrrr}
		\toprule
		 & Empirical & Plug-in & Full-Gp & OUQ \\ 
     \cmidrule(lr){2-2}\cmidrule(lr){3-3}\cmidrule(lr){4-4}\cmidrule(lr){5-5}
		Mean & 746.80 & 735.83 & 741.46 & 788 \\
		90$\%$-CI & $[736.7, 747.41]$ & & $[738.76, 744.17]$ & \\
		\bottomrule
	\end{tabular}
\end{table}
The overall optimization can also be realized via Algorithm \ref{Alg : p.o.f inequality}, by relaxing the equality constraints into inequality constraints. This way, it is possible to quantify the sensitivity of the maximal quantile to the moment values.
\section{Summary}
\label{sec: Conclusion}

Metamodels are widely used in industry to perform uncertainty propagation, in particular to evaluate measures of risk such as high quantiles. In this work, we successfully increased the robustness of the quantile evaluation by removing the main sources of uncertainties tainting the inputs of the computer code. We evaluated the maximum measure of risk over a class of distribution. We focus on set of measures only known by some of their moments, and adapted the theory of canonical moments into an improved methodology for solving OUQ problems. 
Our objective function has been parameterized with the canonical moments, which allows the natural integration of the constraints. The optimization can therefore be performed free of constraints, thus drastically increasing its efficiency. The restriction to moment constraints suits most of practical engineering cases. We also provide an algorithm to deal with inequality constraints on the moments, if an uncertainty lies in their values. Our algorithm shows very good performances and great adaptability to any constraints order. However, the optimization is subject to the curse of dimension and should be kept under 10 input parameters. 

The joint distribution of the optimum is a discrete measure. One can criticize that it hardly corresponds to a physical, real world, interpretation. In order to address this issue, we will search for new optimization sets whose extreme points are not discrete measure. The unimodal class found in the literature of robust Bayesian analysis or the $\varepsilon$-contamination class, might be of some interest in this situation. New measures of risk will also be explored, for instance, superquantiles \citep{rockafellar_random_2014}, and Bayesian estimates associated to a given utility or loss function \citep{berger_statistical_1985}, which are of particular industrial interest.


\appendix

\section{Proof of duality proposition \ref{THM : DUALITY THEOREM}}\label{app: proof}

\begin{proof}
	we denote by $\displaystyle a = \sup_{\mu \in \mathcal{A}} \bigg[ \inf \left\lbrace h \in \mathbb{R}\ ; \ F_{\mu}(h) \geq p \right\rbrace\bigg]\ $ and $\displaystyle\ b = \inf \left\lbrace h  \in \mathbb{R}\; | \; \inf_{\mu\in\mathcal{A}} F_{\mu}(h) \geq p \right\rbrace$. In order to prove $a=b$, we proceed in two step. First step, we have
	\begin{eqnarray*}
		 & & \text{for all } h \geq b \ ;\  \inf_{\mu\in \mathcal{A}} F_\mu(h)  \geq p\ ,\\
		\Leftrightarrow & & \text{for all } h \geq b \ \text{ and for all } \mu\in\mathcal{A}\ ;\  F_\mu(h)  \geq p\ ,\\
		\Leftrightarrow & & \text{for all } \mu \in \mathcal{A}\ \text{ and for all } h \geq b \ ;\   F_\mu(h)  \geq p\ ,\\
		\Rightarrow & & \text{for all } \mu \in \mathcal{A}\ ;\ \inf\{h \in \mathbb{R} \, | \,  F_\mu(h) \geq p\}  \leq b \ ,
	\end{eqnarray*}
	so that $b \geq a$. Second step, because $a$ is the sup of the quantiles,
	\begin{eqnarray*}
			& & \text{for all } h\geq a\ ;\ \text{ for all }\mu\in\mathcal{A}\ ;\   F_\mu(h)  \geq p\ ,\\
			\Rightarrow & &  \text{for all } h\geq a\ ;\   \inf_{\mu\in\mathcal{A}} F_\mu(h)  \geq p\ ,
	\end{eqnarray*}
	so that
	\[\inf\ \bigg[ h  \in \mathbb{R} \ | \ \inf_{\mu \in \mathcal{A}}  F_{\mu}(h) \geq p \bigg]   \leq a \ ,\]
	and $b\leq a$.
\end{proof}

\section{Basic properties of continuous fraction}\label{app: continous fraction}
\begin{lemma}
	A finite continued fraction is an expression of the form 
	\[ b_0+\frac{a_1}{b_1+\frac{a_2}{b_2+ \dots}} = b_0 + \cfrac{a_1}{b_1} + \cfrac{a_2}{b_2} + \dots + \cfrac{a_n}{b_n} = \frac{A_n}{B_n} \ .\]
	The quantities $A_n$ and $B_n$ are called the $n$th partial numerator and denominator. There are basic recursive relations for the quantities $A_n$ and $B_n$ given by
	\begin{eqnarray*}
	A_n & = & b_n A_{n-1} + a_n A_{n-2} \ , \\
	B_n & = & b_n B_{n-1} + a_n B_{n-2} \ ,
	\end{eqnarray*}
	for $n \geq 1$ with initial conditions 
	\begin{eqnarray*}
	A_{-1}=1 & \quad , \quad & A_0=b_0 \ , \\
	B_{-1}=0 & \quad , \quad & B_0=1 \ .
	\end{eqnarray*}
	\label{Lem : Continued Fraction}
	\end{lemma}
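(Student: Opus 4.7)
The plan is to prove the identity
\[ b_0 + \frac{a_1}{b_1 + \frac{a_2}{b_2 + \cdots + \frac{a_n}{b_n}}} = \frac{A_n}{B_n} \]
by induction on $n$, treating $a_k, b_k$ as abstract quantities for which every denominator encountered is nonzero.

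\emph{Base cases.} For $n=0$ one has $A_0/B_0 = b_0/1 = b_0$, matching the ``empty tail''. For $n=1$ the recurrence yields $A_1 = b_1 A_0 + a_1 A_{-1} = b_1 b_0 + a_1$ and $B_1 = b_1 B_0 + a_1 B_{-1} = b_1$, so $A_1/B_1 = b_0 + a_1/b_1$, as required.

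\emph{Inductive step.} Fix $n \geq 2$ and assume the lemma for continued fractions of length $n-1$. The key observation is that a continued fraction of length $n$ collapses into one of length $n-1$ by modifying its deepest denominator: setting $\tilde b_{n-1} := b_{n-1} + a_n/b_n$, we have
\[ b_0 + \frac{a_1}{b_1 + \cdots + \frac{a_n}{b_n}} \;=\; b_0 + \frac{a_1}{b_1 + \cdots + \frac{a_{n-1}}{\tilde b_{n-1}}}. \]
Let $\tilde A_k, \tilde B_k$ be the partial numerators and denominators produced by the same two-term recurrence after replacing $b_{n-1}$ by $\tilde b_{n-1}$. Since $A_k, B_k$ for $k \leq n-2$ involve only $b_0,\dots,b_{n-2}$ and $a_1,\dots,a_{n-2}$, only the index $k=n-1$ is affected, and the linearity of the recurrence in $b_{n-1}$ gives
\[ \tilde A_{n-1} = A_{n-1} + \frac{a_n}{b_n}\, A_{n-2}, \qquad \tilde B_{n-1} = B_{n-1} + \frac{a_n}{b_n}\, B_{n-2}. \]
Multiplying both identities by $b_n$ and using the claimed step-$n$ recurrences, one recognizes $b_n \tilde A_{n-1} = b_n A_{n-1} + a_n A_{n-2} = A_n$ and $b_n \tilde B_{n-1} = B_n$. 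By the induction hypothesis the continued fraction of length $n$ equals $\tilde A_{n-1}/\tilde B_{n-1} = A_n/B_n$, completing the induction.

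\emph{Main obstacle.} The only conceptual point is spotting the ``collapse'' $b_{n-1} \mapsto b_{n-1}+a_n/b_n$ that lets an induction on length get off the ground, since the continued fraction is built from the outside in while the recurrence for $A_n, B_n$ builds from the inside out. Once this substitution is identified, the rest is a one-line verification using linearity. An essentially equivalent formulation rewrites the recurrence as the matrix product
\[ \begin{pmatrix} A_n & A_{n-1} \\ B_n & B_{n-1} \end{pmatrix} = \begin{pmatrix} b_0 & 1 \\ 1 & 0 \end{pmatrix}\begin{pmatrix} b_1 & 1 \\ a_1 & 0 \end{pmatrix}\cdots \begin{pmatrix} b_n & 1 \\ a_n & 0 \end{pmatrix}, \]
from which the lemma also follows by associativity; I would use the direct induction above, as it is shorter and avoids introducing extra notation.
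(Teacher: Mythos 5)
Your proof is correct. Note that the paper does not actually prove this lemma: it is stated in Appendix B as a classical fact (the standard Wallis--Euler recurrence for continued-fraction convergents), so there is no authorial argument to compare against. Your induction via the collapse $b_{n-1}\mapsto \tilde b_{n-1}=b_{n-1}+a_n/b_n$ is the standard textbook proof, and you handle the one subtle point properly: since the identity is proved for arbitrary coefficients, the induction hypothesis may legitimately be invoked with the modified deepest denominator, and linearity of the recurrence in $b_{n-1}$ then yields $\tilde A_{n-1}/\tilde B_{n-1}=A_n/B_n$ after clearing $b_n$.
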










\bibliographystyle{IJ4UQ_Bibliography_Style}

\bibliography{Biblio}

\begin{thebibliography}{10}

\bibitem{pourgol-mohamad_integrated_2009}
Pourgol-Mohamad, M., Modarres, M., and Mosleh, A., Integrated {Methodology} for
  {Thermal}-{Hydraulic} {Code} {Uncertainty} {Analysis} with {Application},
  {\em Nuclear Technology}, 165(3):333--359, March 2009.

\bibitem{iooss_advanced_2018}
Iooss, B. and Marrel, A., Advanced methodology for uncertainty propagation in
  computer experiments with large number of inputs, {\em Nuclear Technology},
  pp. 1--19, 2019.

\bibitem{prosek_state---art_2007}
Prošek, A. and Mavko, B., The {State}-of-the-{Art} {Theory} and {Applications}
  of {Best}-{Estimate} {Plus} {Uncertainty} {Methods}, {\em Nuclear
  Technology}, 158(1):69--79, April 2007.

\bibitem{sanchez-saez_uncertainty_2018}
Sanchez-Saez, F., SÃ¡nchez, A.I., Villanueva, J.F., Carlos, S., and
  Martorell, S., Uncertainty analysis of a large break loss of coolant accident
  in a pressurized water reactor using non-parametric methods, {\em Reliability
  Engineering and System Safety}, 174(C):19--28, 2018.

\bibitem{wallis_uncertainty_2004}
Wallis, G., Uncertainties and probabilities in nuclear reactor regulation, {\em
  Nuclear Engineering and Design}, 237:1586--1592, 2004.

\bibitem{de_rocquigny_uncertainty_2008}
de~Rocquigny, E., Devictor, N., Tarantola, S., Lefebvre, Y., Pérot, N.,
  Castaings, W., Mangeant, F., Schwob, C., Lavin, R., Masse, J.R., Limbourg,
  P., Kanning, W., and Gelder, P., Uncertainty in industrial practice: {A}
  guide to quantitative uncertainty management, January 2008.

\bibitem{fang_design_2005}
Fang, K.T., Li, R., and Sudjianto, A., {\em Design and {Modeling} for
  {Computer} {Experiments} ({Computer} {Science} \& {Data} {Analysis})},
  Chapman \& Hall/CRC, 2005.

\bibitem{rasmussen_gaussian_2005}
Rasmussen, C.E. and Williams, C.K.I., {\em Gaussian {Processes} for {Machine}
  {Learning} ({Adaptive} {Computation} and {Machine} {Learning})}, The MIT
  Press, 2005.

\bibitem{cannamela_controlled_2008}
Cannamela, C., Garnier, J., and Iooss, B., Controlled {Stratification} for
  {Quantile} {Estimation}, {\em The Annals of Applied Statistics},
  2(4):1554--1580, 2008.

\bibitem{lorenzo_assessment_2011}
Lorenzo, G., Zanocco, P., Giménez, M., Marquès, M., Iooss, B., Bolado~Lavín,
  R., Pierro, F., Galassi, G., D'Auria, F., and Burgazzi, L., Assessment of an
  {Isolation} {Condenser} of an {Integral} {Reactor} in {View} of
  {Uncertainties} in {Engineering} {Parameters}, {\em Science and Technology of
  Nuclear Installations}, 2011.

\bibitem{oakley_estimating_2004}
Oakley, J., Estimating {Percentiles} of {Uncertain} {Computer} {Code}
  {Outputs}, {\em Journal of the Royal Statistical Society. Series C (Applied
  Statistics)}, 53(1):83--93, 2004.

\bibitem{hessling_robustness_2015}
Hessling, J.P. and Uhlmann, J., {ROBUSTNESS} {OF} {WILKS}' {CONSERVATIVE}
  {ESTIMATE} {OF} {CONFIDENCE} {INTERVALS}, {\em International Journal for
  Uncertainty Quantification}, 5(6):569--583, 2015.

\bibitem{huber_use_1973}
Huber, P.J., The use of {Choquet} capacities in statistics, {\em Bulletin of
  the International Statistical Institute}, 45(4), 1973.

\bibitem{ruggeri_robust_2005}
Ruggeri, F., Rios~Insua, D., and Martin, J.
\newblock Robust {Bayesian} {Analysis}.
\newblock In: Dey, D.K. and Rao, C.R. (Eds.), {\em Handbook of {Statistics}},
  Vol.~25 of Bayesian {Thinking}, pp. 623--667. Elsevier, January 2005.

\bibitem{deroberts_bayesian_1981}
DeRoberts, L. and Hartigan, J.A., Bayesian {Inference} {Using} {Intervals} of
  {Measures}, {\em The Annals of Statistics}, 9(2):235--244, March 1981.

\bibitem{sivaganesan_ranges_1989}
Sivaganesan, S. and Berger, J.O., Ranges of {Posterior} {Measures} for {Priors}
  with {Unimodal} {Contaminations}, {\em The Annals of Statistics},
  17(2):868--889, June 1989.

\bibitem{owhadi_optimal_2013}
Owhadi, H., Scovel, C., Sullivan, T.J., McKerns, M., and Ortiz, M., Optimal
  {Uncertainty} {Quantification}, {\em SIAM Review}, 55(2):271--345, January
  2013, arXiv: 1009.0679.

\bibitem{winkler_extreme_1988}
Winkler, G., Extreme {Points} of {Moment} {Sets}, {\em Math. Oper. Res.},
  13(4):581--587, November 1988.

\bibitem{henrion_gloptipoly_2009}
Henrion, D., Lasserre, J.B., and Löfberg, J., {GloptiPoly} 3: moments,
  optimization and semidefinite programming, {\em Optimization Methods and
  Software}, 24(4-5):761--779, October 2009.

\bibitem{betro_robust_2005}
Betr\`o, B.
\newblock {\em Robust {Bayesian} {Analysis}}, chapter 15, Methods for Global
  Prior Robustness under Generalized Moment Conditions.
\newblock Lecture {Notes} in {Statistics}. Springer-Verlag, New York, 2000.

\bibitem{lasserre_moments_2010}
Lasserre, J.B., {\em Moments, positive polynomials and their applications},
  number v. 1 in Imperial {College} {Press} optimization series, Imperial
  College Press ; Distributed by World Scientific Publishing Co, London :
  Signapore ; Hackensack, NJ, 2010.
\newblock OCLC: ocn503631126.

\bibitem{mckerns_optimal_2012}
McKerns, M., Owhadi, H., Scovel, C., Sullivan, T.J., and Ortiz, M., The optimal
  uncertainty algorithm in the mystic framework, {\em CoRR}, abs/1202.1055,
  2012.

\bibitem{dette_theory_1997}
Dette, H. and Studden, W.J., {\em The {Theory} of {Canonical} {Moments} with
  {Applications} in {Statistics}, {Probability}, and {Analysis}},
  Wiley-Blackwell, New York, September 1997.

\bibitem{skibinsky_range_1967}
Skibinsky, M., The range of the $(n+1)$th moment for distributions on [0,1],
  {\em Journal of Applied Probability}, 4(3):543--552, November 1967.

\bibitem{skibinsky_maximum_1977}
Skibinsky, M., The {Maximum} {Probability} on an {Interval} {When} the {Mean}
  and {Variance} {Are} {Known}, {\em Sankhyā: The Indian Journal of
  Statistics, Series A (1961-2002)}, 39(2):144--159, 1977.

\bibitem{skibinsky_principal_1986}
Skibinsky, M., Principal representations and canonical moment sequences for
  distributions on an interval, {\em Journal of Mathematical Analysis and
  Applications}, 120(1):95--118, November 1986.

\bibitem{hausdorff_momentprobleme_1923}
Hausdorff, F., Momentprobleme f\"{u}r ein endliches {Intervall}, {\em
  Mathematische Zeitschrift}, 16:220--248, 1923.

\bibitem{royden_real_1968}
Royden, H., {\em Real analysis}, Macmillan, 1968.

\bibitem{price_differential_2005}
Price, K., Storn, R.M., and Lampinen, J.A., {\em Differential {Evolution}: {A}
  {Practical} {Approach} to {Global} {Optimization}}, Natural {Computing}
  {Series}, Springer-Verlag, Berlin Heidelberg, 2005.

\bibitem{pasanisi_estimation_2012}
Pasanisi, A., Keller, M., and Parent, E., Estimation of a quantity of interest
  in uncertainty analysis: {Some} help from {Bayesian} decision theory, {\em
  Reliability Engineering \& System Safety}, 100:93--101, April 2012.

\bibitem{loeppky_choosing_2009}
Loeppky, J.L., Sacks, J., and Welch, W.J., Choosing the {Sample} {Size} of a
  {Computer} {Experiment}: {A} {Practical} {Guide}, {\em Technometrics}, 51(4).

\bibitem{da_veiga_global_2013}
Da~Veiga, S., Global {Sensitivity} {Analysis} with {Dependence} {Measures},
  {\em Journal of Statistical Computation and Simulation}, 85, November 2013.

\bibitem{raguet2018target}
Raguet, H. and Marrel, A., Target and conditional sensitivity analysis with
  emphasis on dependence measures, {\em arXiv preprint arXiv:1801.10047}, 2018.

\bibitem{damblin2019bayesian}
Damblin, G. and Gaillard, P., Bayesian inference and non-linear extensions of
  the circe method for quantifying the uncertainty of closure relationships
  integrated into thermal-hydraulic system codes, {\em arXiv preprint
  arXiv:1902.04931}, 2019.

\bibitem{decrecy_2001}
De~Cr\'ecy, A., Determination of the uncertainties of the constitutive
  relationships of the cathare 2 code, {\em M$\&$C}, 3, 10 2001.

\bibitem{gratiet_metamodel-based_2017}
Gratiet, L.L., Marelli, S., and Sudret, B., Metamodel-{Based} {Sensitivity}
  {Analysis}: {Polynomial} {Chaos} {Expansions} and {Gaussian} {Processes},
  {\em Handbook of Uncertainty Quantification}, pp. 1289--1325, 2017.

\bibitem{rockafellar_random_2014}
Rockafellar, R.T. and Royset, J.O., Random {Variables}, {Monotone} {Relations},
  and {Convex} {Analysis}, {\em Math. Program.}, 148(1-2):297--331, December
  2014.

\bibitem{berger_statistical_1985}
Berger, J.O., {\em Statistical {Decision} {Theory} and {Bayesian} {Analysis}},
  Springer {Series} in {Statistics}, Springer-Verlag, New York, 2 edition,
  1985.

\end{thebibliography}
\end{document}